\renewcommand{\paragraph}{\roman{paragraph}}
\newcommand{\Z}{\mathbb{Z}}
\newcommand{\F}{\mathbb{F}}
\newtheorem{theorem}{Theorem}
\newtheorem{corollary}{Corollary}
\newtheorem{lemma}{Lemma}
\newtheorem{proposition}{Proposition}
\theoremstyle{definition}
\newtheorem{definition}{Definition}
\newtheorem{remark}{Remark}
\newcommand{\RM}{\mathcal R \mathcal M}
\newcommand{\op}{\bigoplus}
\newcommand{\Fq}{\mathbb F_q}
\newcommand{\HH}{\mathcal{H}}
\begin{document}
\title{\bf How many weights can a cyclic code have ?
\thanks{This research is supported by National Natural Science Foundation of China (61672036), Excellent Youth Foundation of Natural Science Foundation of Anhui Province (1808085J20),
Technology Foundation for Selected Overseas Chinese Scholar, Ministry of Personnel of China (05015133).}}

\author{
\small{Minjia Shi$^{1}$, Xiaoxiao Li$^{1}$, Alessandro Neri$^{2}$\thanks{The author was supported by the Swiss National Science Foundation through grant no. 169510.},  and Patrick Sol\'e$^3$}\\ 
\and \small{${}^1$School of Mathematical Sciences, Anhui University, Hefei, 230601, China}\\
  \and \small{${}^2$Institute of Mathematics, University of Z\"{u}rich, Switzerland}
\and
\small{${}^3$CNRS/LAGA, University of Paris 8, 2 rue de la Libert\'e, 93 526 Saint-Denis, France}\\
}
\date{}
\maketitle
\begin{abstract}Upper and lower bounds on the largest number of weights in a cyclic code  of given length, dimension and alphabet are given. An application to irreducible cyclic codes is considered.
Sharper upper bounds are given for the special cyclic codes (called here strongly cyclic), {whose nonzero codewords have period equal to the length of the code}. Asymptotics are derived on the function
 $\Gamma(k,q),$ {that is defined as} the largest number of nonzero weights a cyclic code of dimension $k$ over $\F_q$ can have, and an algorithm to compute it is sketched. The nonzero weights in some infinite families of Reed-Muller codes, either binary or $q$-ary, as well as in the $q$-ary Hamming code are determined, two difficult results of independent interest.
\end{abstract}

{\bf Keywords:} cyclic codes, irreducible cyclic codes, Hamming codes, Reed-Muller codes
\section{Introduction}
Recently, a series of works has been released on the study of the weight set of a code. In \cite{SZSC} Shi et al. conjectured that  the maximum number of nonzero weigths that a $k$-dimensional block code over $\F_q$ can have is $\frac{q^k-1}{q-1}$,  giving an answer only for the case $q=2$ or $k=2$. The complete proof of this conjecture was given in \cite{an18} where the two authors provided two different constructions, and independently in \cite{me18}. Shorter codes with the maximum number of weights for a given dimension were then discussed in \cite{a18,co18}. In the present paper we address the same type of questions for cyclic codes.
Thus, we study the function $\Gamma(k,q),$  {defined as} the largest number of nonzero weights a cyclic code of dimension $k$ over $\F_q$ can have.
We derive upper bounds on that quantity by simple counting arguments bearing on the cycle structure of the code. An alternative approach is to use weight concentration theorems, derived first in \cite{LN} in the language of linear recurrences. In the case of cyclic codes {whose nonzero codewords have period equal to the length} (called strongly cyclic codes in the sequel) we obtain smaller upper bounds {than the ones derived} for the class of all cyclic codes.
This suggests to study $\Gamma^0(k,q),$  {that is} the largest number of nonzero weights a strongly cyclic code of dimension $k$ over $\F_q$ can have.
This discrepancy in behaviour between $\Gamma(k,q),$ and $\Gamma^0(k,q),$ is particularly evident in the asymptotic upper bounds. We also derive lower bounds on these two functions, using special codes, or the covering radius of the dual code. An algorithm is given to compute $\Gamma(k,q),$ for small values of $kq.$  A first appendix derives the number of weights in several infinite families of $q$-ary Reed-Muller codes, including the extended Hamming code and some binary, ternary and quinary Reed-Muller codes. A second appendix derives the number of weights in the $q$-ary Hamming code when $q>2.$ These last two  technical results are of interest in their own right. While generating functions
or recursions \cite{ki07} are known in these cases, they are insufficient to determine the explicit weights in general.
 Exact enumeration of cyclic codes can require some deep techniques of Number Theory \cite{H}. In particular, the codes of Melas and Zetterberg give interesting lower bounds for a wide range of parameters. The use of the celebrated Delsarte bound on the covering radius of codes \cite{C+}, leads us to define a new combinatorial function ($T[n,k]$) of independent interest.

The material is organized as follows. The next section collects some background material on linear codes and cyclic codes. Section 3 is dedicated to upper bounds and Section 4 to lower bounds. Section 5 derives the asymptotic version of some of the preceding bounds. Section 6 gives an algorithm to compute $\Gamma(k,q).$ Section 7 concludes the paper and mentions some challenging open problems. Appendix I derives
the weight spectrum of some families of $q$-ary Reed-Muller codes. Appendix II determines the weights in the $q$-ary Hamming code for $q\ge 3.$
\section{Definitions and Notation}
\subsection{Linear codes}
A {\bf (linear) code} $C$ of length $n$  over a finite field $\mathbb{F}_{q}$ is a $\mathbb{F}_{q}$ vector subspace of  $\mathbb{F}_{q}^n.$ The dimension of the code is its dimension as a $\mathbb{F}_{q}$ vector space, and is denoted by $k.$ The elements of $C$ are called {\bf codewords}.

The {\bf dual} $C^ \bot$ of a code $C$ is understood {with respect to} the standard inner product.

 The {\bf (Hamming) weight} of ${\bf x} \in \F_q^N$ is the number of indices $i$ where $x_i\neq 0$, {and it is denoted by $w_H(\mathbf{x})$.}  The minimum nonzero weight $d$ of a linear code is called the {\bf minimum distance}. The {\bf dual distance} of a code is the minimum distance of its dual.
 Every linear code satisfies the {\bf Singleton bound} \cite[Th 11, Chap. 1]{MS} on its parameters
 $$d\le n-k+1.$$
 A code meeting that bound is called {\textbf{maximum distance separable (MDS)}}. See \cite[ Chap. 11]{MS} for general knowledge on this family of codes.
\subsection{Cyclic codes}
A {\bf cyclic} code of length $n$ over a finite field $\mathbb{F}_{q}$ is a $\mathbb{F}_{q}$ linear code of length $n$ invariant under the coordinate shift. Under the polynomial correspondence such a code can be regarded as an ideal in the ring $\mathbb{F}_{q}[x]/(x^n-1).$ It can be shown that this ideal is principal, with a unique monic generator $g(x),$ called {\bf the generator polynomial} of the code. The {\bf check polynomial} $h(x)$ is then defined as the
 quotient $(x^n-1)/g(x).$ A well-known fact is that the codewords satisfy a linear recurrence of characteristic polynomial the reciprocal polynomial of $h(x)$ \cite[p. 195]{MS}. Thus any codeword $\mathbf{c}$ can be continued by repetition into a bi-infinite periodic sequence $\widehat{\mathbf{c}}$ which is periodic of least period a divisor of $n.$
 The {\bf period} of a codeword $\mathbf{c}$ is understood to be the smallest integer $T_0$ such $\widehat{\mathbf{c}}_{i+T_0}=\widehat{\mathbf{c}}_i$ for all integers $i.$ Thus, the period is always a divisor of $n.$
 A cyclic code is {\bf irreducible} over $\F_q$ if its check polynomial $h(x)$ is irreducible over $\F_q[x].$
 The {\bf period} of a polynomial $h(x)\in \F_q[x],$ such that $h(0)\neq 0$ is the smallest integer $t$ such that $h(x)$ divides $x^T-1$ over $\F_q[x].$
 If $C$ is a cyclic code, its codewords are partitioned into orbits under the action of the shift. We call these orbits the {\bf cyclic classes} of $C.$
 \subsection{Combinatorial functions}
Here we introduce the combinatorial functions we are going to study in this work. Let $q$ be a prime power, $0\leq k \leq n$ be non negative integers. We define $\Gamma(k,q)$ as the largest number of nonzero weights of a cyclic code of dimension $k$ over $\F_q$. Moreover,  we define  $\Gamma(n,k,q)$ as the largest number of nonzero weights of a cyclic code of fixed  length $n$ and dimension $k$ over $\F_q,$ if such a code exists, and by zero otherwise.
 The same functions for strongly cyclic codes (to be defined below) are denoted by $\Gamma^0(k,q),$ and $\Gamma^0(n,k,q),$ respectively.
Recall also the functions $L(k,q)$ and $L(n,k,q)$ introduced in \cite{SZSC}, that represent respectively the maximum number of nonzero weights that a linear code of dimension $k$ over $\Fq$, and the  maximum number of nonzero weights for linear codes with a fixed length $n$. The function $L(k,q)$ was completely determined and shown to be equal to $\frac{q^k-1}{q-1}$ in \cite{an18}, while for $L(n,k,q)$ some partial answers were given in \cite{a18, SZSC}.
\section{Upper bounds}
\subsection{Cycle structure}
{Let $\rho:\F_q^n\longrightarrow \F_q^n$ denote the cyclic shift operator.
\begin{lemma}
 Let $C$ be an $[n,k]_q$ cyclic code and $\mathbf{c}\in C$ be a codeword of period $t$. Let moreover $\alpha \in \F_q^*$ and $i \in \{0,\ldots, n-1\}$ such that
$$\alpha \mathbf{c} =\rho^i(\mathbf{c}).$$
Then, $\alpha^r=1$, where $r=\frac{t}{\gcd(t,i)}$.
Moreover,  $\alpha$ belongs to the unique cyclic subgroup of $\F_q^*$ of order $\gcd(t,q-1)$.
\end{lemma}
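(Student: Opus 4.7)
The plan is to iterate the relation $\alpha\mathbf{c}=\rho^i(\mathbf{c})$ and then use the minimality of the period $t$ to extract the required equation on $\alpha$.

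First I would prove, by an easy induction on $j\geq 0$, that
$$\alpha^j \mathbf{c}=\rho^{ji}(\mathbf{c}).$$
The base case $j=1$ is the hypothesis, and for the inductive step one uses that the cyclic shift $\rho$ is $\F_q$-linear, so that $\rho^{(j+1)i}(\mathbf{c})=\rho^i(\alpha^j\mathbf{c})=\alpha^j\rho^i(\mathbf{c})=\alpha^{j+1}\mathbf{c}$.

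Next I would specialize this identity at $j=r=t/\gcd(t,i)$. Writing $d=\gcd(t,i)$, the product $ri=t\cdot(i/d)$ is an integer multiple of $t$. Since $\mathbf{c}$ has period $t$, the extended bi-infinite sequence $\widehat{\mathbf{c}}$ is invariant under any shift by a multiple of $t$, so $\rho^{ri}(\mathbf{c})=\mathbf{c}$. Combining this with the iterated relation gives $\alpha^r\mathbf{c}=\mathbf{c}$, and because $\mathbf{c}\neq 0$ (its period $t$ is positive), we conclude $\alpha^r=1$, which is the first assertion.

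For the second statement I would combine two divisibility facts about the multiplicative order of $\alpha$: since $\alpha\in\F_q^*$ we have $\alpha^{q-1}=1$, and we just showed $\alpha^r=1$, so the order of $\alpha$ divides $\gcd(r,q-1)$. Because $r=t/\gcd(t,i)$ divides $t$, the integer $\gcd(r,q-1)$ in turn divides $\gcd(t,q-1)$. Hence the order of $\alpha$ divides $\gcd(t,q-1)$, and since $\F_q^*$ is cyclic it contains exactly one subgroup of each order dividing $q-1$, namely the set of elements whose order divides that number. Therefore $\alpha$ lies in the unique cyclic subgroup of $\F_q^*$ of order $\gcd(t,q-1)$.

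There is no serious obstacle here; the only point requiring care is the correct handling of periodicity in step two, i.e.\ verifying that $ri$ is a multiple of $t$ so that one may invoke the minimality of the period to cancel the shift.
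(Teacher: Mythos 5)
Your proof is correct and follows essentially the same route as the paper's: iterate the relation to get $\mathbf{c}=\rho^{ri}(\mathbf{c})=\alpha^{r}\mathbf{c}$, deduce $\alpha^{r}=1$, and then combine $r\mid t$ with $\alpha^{q-1}=1$ to place $\alpha$ in the subgroup of order $\gcd(t,q-1)$. You merely spell out the induction and the divisibility chain $\gcd(r,q-1)\mid\gcd(t,q-1)$ that the paper leaves implicit.
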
}\vspace{-0.7cm}
{\begin{proof}
 Let $\mathbf{c}$ be a codeword of period $t$ and suppose $\alpha \mathbf{c}=\rho^i(\mathbf{c})$. Then $\mathbf{c}=\rho^{ri}(\mathbf{c})=\alpha^r\mathbf{c}$, and this implies $\alpha^r=1$. Let now $H$ be the unique subgroup of $\F_q^*$ of order $\gcd(t,q-1)$. Since by definition $r|t$ and also $\alpha^{q-1}=1$, we get
$\alpha^{\gcd(t,q-1)}=1$, i.e. $\alpha\in H$.
\end{proof}}

If $C$ is a cyclic code, denote by $B_t$ the number of nonzero codewords of period $t$ it contains. A cyclic code such that $B_t=0$ for $1\le t<n$ shall be called {\bf strongly cyclic}.

{\begin{lemma}\label{lem:BT} If $C$ is an $[n,k]_q$ cyclic code with $s$ nonzero weights, then
$$s \le \sum_{t |n} \frac{B_t}{\mathrm{lcm}(t,q-1)}\le 1 + \sum_{1<t |n} \frac{B_t}{\mathrm{lcm}(t,q-1)}.$$
\end{lemma}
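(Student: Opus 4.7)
The plan is to partition the nonzero codewords of $C$ into orbits under the joint action of scalar multiplication by $\F_q^*$ and cyclic shifts, observe that codewords in a common orbit share both their Hamming weight and their period, and bound $s$ by the total number of such orbits. Grouping the orbits by period then produces a sum of exactly the shape in the statement.

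Concretely, I would let $G = \F_q^* \times \Z/n\Z$ act on $C$ via $(\alpha,i)\cdot \mathbf{x} = \alpha\rho^i(\mathbf{x})$, so that
$$s \;\le\; \#\{G\text{-orbits in } C\setminus\{0\}\} \;=\; \sum_{t\mid n}\#\{G\text{-orbits of period }t\}.$$
The central step is to show that every $G$-orbit through a codeword of period $t$ contains at least $\mathrm{lcm}(t,q-1)$ elements. Fixing such a $\mathbf{c}$, I would compute the stabilizer by projecting onto the shift component: the image $I := \{i\in\Z/n\Z : \rho^i(\mathbf{c}) \in \F_q^*\mathbf{c}\}$ is a subgroup of $\Z/n\Z$, and the assignment $i\mapsto \phi(i)$ defined by $\rho^i(\mathbf{c}) = \phi(i)\mathbf{c}$ is a well-defined homomorphism $\phi : I \to \F_q^*$ with kernel $t\,\Z/n\Z$ and image a subgroup $S\le \F_q^*$. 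Lemma 1 forces $S$ to sit inside the unique subgroup $H\le\F_q^*$ of order $\gcd(t,q-1)$, so $d := |S|$ divides $\gcd(t,q-1)$. Since $(\alpha,i)\in\mathrm{Stab}(\mathbf{c})$ is uniquely determined by $i\in I$ (via $\alpha = \phi(i)^{-1}$), we get $|\mathrm{Stab}(\mathbf{c})| = |I| = (n/t)\cdot d$, and the orbit-stabilizer theorem yields
$$|G\cdot\mathbf{c}| \;=\; \frac{(q-1)n}{(n/t)\,d} \;=\; \frac{(q-1)t}{d} \;\ge\; \frac{(q-1)t}{\gcd(t,q-1)} \;=\; \mathrm{lcm}(t,q-1).$$

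Dividing $B_t$ by this lower bound and summing over $t\mid n$ establishes the first inequality. For the second, the period-$1$ codewords are precisely the constant vectors, so either $\mathbf{1}\in C$ and $B_1 = q-1$, or $B_1 = 0$; in both cases the $t=1$ term $B_1/\mathrm{lcm}(1,q-1)$ contributes at most $1$, and isolating it from the sum yields the claim. The only subtle point is verifying that $\phi$ is a well-defined group homomorphism with the stated kernel; once that is in place, Lemma 1 delivers the divisibility $d\mid \gcd(t,q-1)$ essentially for free, and the rest is orbit-stabilizer bookkeeping.
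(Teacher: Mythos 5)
Your proof is correct and follows essentially the same route as the paper: both arguments partition the period-$t$ codewords into orbits under the combined action of shifts and $\F_q^*$-scalars, use Lemma 1 to show each such orbit has at least $\mathrm{lcm}(t,q-1)$ elements, and handle the $t=1$ term via $B_1\in\{0,q-1\}$. Your orbit--stabilizer computation of the stabilizer through the homomorphism $\phi$ is just a cleaner formalization of the paper's two-step count (at most $B_t/t$ cyclic classes, grouped into bundles of $(q-1)/\gcd(t,q-1)$ equal-weight classes).
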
}
 \vspace{-0.7cm}
{\begin{proof} The number of cyclic classes of codewords of period $t$ is at most $\frac{B_t}{t}.$ All codewords in the same class share the same weight.  Then, we use the Lemma above. Let $\mathbf{c}$ be a codeword and $H$ be the unique subgroup of $\F_q^*$ of order $\gcd(t,q-1)$. Now, for every representative $\alpha$ of $\F_q^*/H$, the codeword $\alpha \mathbf{c}$ gives a different class that obviously shares the same weight of the one of $\mathbf{c}$. Hence, there are at most $\frac{B_t \gcd(t,q-1)}{t (q-1)}=\frac{B_t}{\mathrm{lcm}(t,q-1)}$ distinct weights among these codewords, and this shows the first inequality. The second directly follows from the fact that if a codeword has period $1$ then it is necessarily a multiple of the vector of all ones, and therefore $B_1\in \{0,q-1\}$.
 \end{proof}}

\noindent{{\bf Example.} Consider the code of dimension $2$ over $\F_5,$ with length $20$ and check polynomial $x^2+x-1.$
 This code contains the Fibonacci numbers read mod $5$ \cite[A082116]{E}. It can be checked to contain $4$ codewords of period $4$ (namely $1,3,4,2, $ repeated five times) and $20$ codewords of period $20.$

 {This simple counting lemma has two important applications. First, we improve the upper bound on  $L(k,q)$ of \cite[Prop. 2]{SZSC} by a factor of at least $\frac{n}{q-1}$ for some large class of cyclic codes, up to a factor $n$ for another subclass.}
{{\theorem \label{brut} If $C$ is a $[n,k]_q$ strongly cyclic code with $s$ nonzero weights, then $$s \le \frac{q^k-1}{\mathrm{lcm}(q-1,n)}.$$ Thus $\Gamma^0(n,k,q)\le \frac{q^k-1}{\mathrm{lcm}(q-1,n)}.$ }}
{\begin{proof}
We apply Lemma \ref{lem:BT} when $B_t=0$ for $t<n,$ so that the sum in the right handside contains only one summand.
 \end{proof}}
{ One can observe that the function $\Gamma^0(n,k,q)$ has an intrisecly different behaviour than the one of $L(n,k,q)$. Indeed, it was proved in \cite{a18} $L(n,k,q)=\frac{q^k-1}{q-1}$ for $n\geq q^{3k-4}$, while  it is easy to see that a strongly cyclic code with that restriction on $n$ can not exist, i.e. $\Gamma^0(n,k,q)=0$ for $n\geq q^{3k-4}$ (it is actually already true for $n \geq q^k$ by Theorem \ref{brut}). This also implies that $\lim_{n\rightarrow \infty}\Gamma^0(n,k,q)=0$, and this is in contrast to the behaviour of  the function $L(n,k,q)$ that is non decreasing in $n$ and $\lim_{n\rightarrow \infty}L(n,k,q)=L(k,q)$,  }
 {As a second implication of Lemma \ref{lem:BT}, in the general case when many of the $B_t$'s are nonzero, we can prove the following result.}
{ {\theorem If $C$ is an $[n,k]_q$ cyclic code with $s$ nonzero weights,  then $$s\le 1+ (q^k-1)\bigg( \sum_{1<t |n} \frac{1}{\mathrm{lcm}(t,q-1)^2}\bigg)^{\frac{1}{2}}.$$}}\vspace{-0.7cm}
{\begin{proof}
   From the second  inequality  in  Lemma \ref{lem:BT}, we get
$$s-1\leq \sum_{1<t |n} \frac{B_t}{\mathrm{lcm}(t,q-1)}.$$
 Squaring both sides of this inequality and applying Cauchy-Schwarz inequality we obtain
  $$ (s-1)^2\le \bigg(\sum_{1<t |n} B_t^2\bigg)\bigg(  \sum_{1<t |n} \frac{1}{\mathrm{lcm}(t,q-1)^2}\bigg).$$
  By definition of the $B_t$'s note that $\sum_{t |n} B_t=q^k-1,$ implying $ \sum_{t |n} B_t^2 \le (q^k-1)^2.$ The result follows by taking the square root of both sides.
 \end{proof}}

 \subsection{Character sums}
 The following result can be derived by using the character sums techniques of \cite[Chapt. 8]{LN}.
 {\theorem \label{LN} If $C$ is an $[n,k]_q$ strongly cyclic code with $s$ nonzero weights, then
 $$ s\le 2 \left(1-\frac{1}{q}\right) q^{k/2}.$$
 Thus $$ \Gamma^0(n,k,q)\le 2 \left(1-\frac{1}{q}\right) q^{k/2}.$$}\vspace{-0.7cm}
\begin{proof}
 By \cite[Cor. 8.83]{LN} we know that the weights $w$ of $C$ lie in the range
 $$\left|n\left(1-\frac{1}{q}\right)-w \right|\le \left(1-\frac{1}{q}\right) q^{k/2}  .$$
 The result follows by computing the length of that interval.
 \end{proof}
 \subsection{Irreducible cyclic codes}

 The weight structure of irreducible cyclic codes has been a research topic since the first works of McEliece and others \cite{LN,M,D} due to their connection to Gauss sums and L-functions, and its intrinsic complexity.

 {\theorem \label{moi}If $C$ is an $[n=\frac{q^k-1}{N},k]_q$ irreducible cyclic code with a check polynomial of period $n,$ and $s$ nonzero weights, then $s\le N.$ }
 \begin{proof}
  Since the check polynomial $h(x)$ is irreducible, it generates the annihilating ideal of each sequence attached to a codeword. If the period of such a sequence were
  $T<n,$ then $h(x)$
  would divide $x^T-1,$ contradicting the hypothesis on the period of $h(x).$ Hence $C$ is strongly cyclic, and we can apply Theorem \ref{brut}. The result follows.
 \end{proof}

\noindent{\bf Example.} Consider the case of $N=2,$ and $q=p$ an odd prime. Such a code is well-known to be a two-weight code \cite{M}.

 A slightly sharper bound can be derived using the results in \cite{D}.

 {\theorem \label{1} If $C$ is an $[n=\frac{q^k-1}{N},k]_q$ irreducible cyclic code with a check polynomial of period $n,$ and $s$ nonzero weights then
 {$$s\le N_k=\gcd\left(N,\frac{q^k-1}{q-1}\right).$$ }}
\vspace{-0.7cm}
 \begin{proof}
  Follows by \cite[(12)]{D} which involves Gaussian periods of order $N_k.$
 \end{proof}

 This shows that Theorem \ref{moi}  can only be tight when $N=N_k,$ or, equivalently, $N$ divides $\frac{q^k-1}{q-1}.$
 Using Theorem \ref{LN}, another bound can be derived.

 {\theorem \label{2} If $C$ is an $[n=\frac{q^k-1}{N},k]_q$ irreducible cyclic code with a check polynomial of period $n,$ and $s$ nonzero weights then
 $s\le 2(1-\frac{1}{q})\sqrt{1+nN}.$ }

 \begin{proof} As explained in the proof of Theorem \ref{moi} we know that all nonzero codewords have period $n.$ Thus the code $C$ is strongly cyclic, and we can apply Theorem \ref{LN}.
 We get rid of $k$ in Theorem \ref{LN} by writing $q^k=1+nN.$
 \end{proof}

 \noindent{\bf Remark.} Depending on the relative values of $n$ and $N,$ either Theorem \ref{2}, or Theorem \ref{1} is sharper than the other.

 A slight improvement on Theorem \ref{2} can be derived for irreducible cyclic codes.

 {\theorem  If $C$ is an $[n=\frac{q^k-1}{N},k]_q$ irreducible cyclic code with a check polynomial of period $n,$ and $s$ nonzero weights then
 $$s\le 2\left(1-\frac{1}{q}\right)\left(\frac{n}{h}-\frac{1}{N}\right)\sqrt{1+nN},$$ where $h=\mathrm{lcm}(n,q-1).$}
 \begin{proof}
 The proof follows the lines of Theorem \ref{2} with \cite[Th. 8.84, (8.37)]{LN} replacing \cite[Cor. 8.83]{LN}. We get rid of $k$ by writing $q^k=1+nN.$
 \end{proof}
 \section{Lower bounds}
 \subsection{Special values}
 We begin with an easy bound.
 {\proposition For all prime powers $q,$ we have $\Gamma(k,q)\ge k.$}

 \begin{proof}
 The  {\bf universe code}, the cyclic $[k,k]_q$ code with generator the {constant polynomial $g(x)=1$}, has $k$ nonzero weights. This shows that $\Gamma(k,k,q)\ge k.$ The result follows by $\Gamma(k,k,q)\le \Gamma(k,q).$
 \end{proof}

 The following result is immediate by \cite{SZS2}. The proof is omitted.
 {\proposition For all prime powers $q,$ we have $\Gamma(2,q)=2.$}

 We recall now some classical cyclic codes. The {\bf repetition code} $R(n,q)$ is the ideal of $\F_q[x]/(x^n-1)$ with generator $\frac{x^n-1}{x-1}.$ Its dual is $P(n,q)=\langle (x-1) \rangle.$
 The {\bf binary Hamming code} $\mathcal{H}_m$ is the binary cyclic code of length $n=2^m-1$ with generator any primitive irreducible polynomial of $\F_2[x]$ of degree $m.$ Its dual the {\bf simplex code} $S_m$ is a one-weight code.
  {\theorem For all integers $n\ge 1$ and all prime powers $q$ with $(n,q)=1,$ we have that $\Gamma(n,1,q)=1,$ and that $\Gamma(n,n-1,q)$ is the number of nonzero weights in $P(n,q).$
 For all primes $m\ge 2,$  we have $\Gamma(n,n-m,2)=n-4, $ and $\Gamma(n,m,2)=1, $ where $n=2^m-1.$
 }
  \begin{proof}
  The first statement follows by the unicity of cyclic codes with dimension (resp. codimension) one. These are the repetition codes (resp. their duals). Their number of weights are easy to compute.
  To prove the second statement, observe that $x^{2^m}-x$ is the product of all monic irreducible polynomials whose degree divides $m$ \cite[Chap. 4, Th. 11]{MS}. If $m$ is a prime number, any divisor of $x^{2^m-1}-1$ of degree $m$ will have then to be irreducible.
  Thus, cyclic codes of dimension (resp. codimension) $m$ will have to be $S_m$ (resp. $\mathcal{H}_m$) or replicated versions of  $S_{m'}$ (resp. $\mathcal{H}_{m'}$) for $m$ a proper divisor of $m'.$
  The result follows on observing that the number of nonzero weights of $\mathcal{H}_m$ is $2^m-5$ (\cite[Chap. 6, Ex. (E2)]{MS}), and the fact that $S_m$  is a one-weight code \cite[Chap.1 \S 9 , Ex]{MS}.
   \end{proof}
   \noindent{\bf Remark.} The fact that the number of nonzero weights of $\mathcal{H}_m$ is $2^m-5$ can also be derived  applying part (1) of Theorem \ref{thm:bin} of the Appendix if we notice that $H_m$ is a punctured $R(m-2,m).$
   {\theorem Let $q>2$ be a prime power and $r\ge 2$ such that $\gcd(r,q-1)=1.$ We have the bound $$ \Gamma\left(\frac{q^r-1}{q-1},\frac{q^r-1}{q-1}-r,q\right)\ge \frac{q^r-1}{q-1}-2.$$}\vspace{-0.7cm}
   \begin{proof} Note that the $q$-ary Hamming code of length $\frac{q^r-1}{q-1}$ is {equivalent to a cyclic code} iff $\gcd(r,q-1)=1$ \cite{HP}.
   The result follows then by Theorem \ref{hamspec} of Appendix II.
   \end{proof}
The next result also uses Reed-Muller codes. We will need some generalization of the binomial coefficients given by the following generating function
$$ (1+z+\ldots+z^{q-1})^m=\sum_{\ell=0}^{m(q-1)}B(q,m,\ell)z^\ell.$$

{\theorem \label{thm:RMweights} With the notations above, we have the following results.\begin{eqnarray*}
\Gamma\bigg(2^m-1,\sum_{i=0}^{m-3}{{m} \choose i},2\bigg)&\ge& {2^{m-1}}-9\,\,\, \mbox{ for }\,\, m\ge 6,\\
\Gamma\bigg(3^m-1,\sum_{i=0}^{2m-2}B(3,m,i),3\bigg)&\ge& 3^m-3\,\,\, \mbox{ for }\,\, m\ge 1,\\
\Gamma\bigg(3^m-1,\sum_{i=0}^{2m-3}B(3,m,i),3\bigg)&\ge& 3^m-7\,\,\, \mbox{ for }\,\, m\ge 3,\\
\Gamma\bigg(5^m-1,\sum_{i=0}^{4m-3}B(5,m,i),5\bigg)&\ge & 5^m-4\,\,\, \mbox{ for }\,\, m\ge 1.\\
\Gamma\bigg(q^m-1,\sum_{i=0}^{(q-1)m-r-1}B(q,m,i),q\bigg)&\ge & q^m-r-{2}\,\,\, \mbox{ for }\,\, 0\le r\le \frac{q-{3}}{2}.
           \end{eqnarray*}
           }\vspace{-0.6cm}
           \begin{proof}
           The result combines  Appendix I with the fact that punctured Reed Muller codes are cyclic \cite[p.1312]{AK}. Note that the dimensions are not affected by puncturing due to the large minimum weights. The number of weights can decrease by at most $1$
           by transitivity of the automorphism group.
           \end{proof}
   The next two theorems rely on some deep algebraic geometric enumeration of cyclic codes \cite{LW,S,V2}. See \cite{H} for a survey.

   {\theorem For all integers $m\ge 3,$ we have $$\Gamma(2^m-1,2m,2)\ge \lceil 2^{m/2}\rceil, $$ and $$\Gamma(2^m+1,2m,2)\ge \lceil 2^{m/2}\rceil .$$}\vspace{-0.8cm}
   \begin{proof} The dual of the binary  Melas code is cyclic of parameters $[2^m-1, 2m].$ It is proved in \cite[Th. 6.3]{LW} that its nonzero weights are all the even integers $w$ in the range
   $$ \left|w-\frac{2^m-1}{2}\right|\le  2^{m/2}.$$
   Similarly, the dual of the Zetterberg code is an irreducible cyclic code of parameters $[2^m+1, 2m].$ It is proved in \cite[Th. 6.6]{LW} that its nonzero weights are all the even integers $w$ in the range
   $$ \left|w-\frac{2^m+1}{2}\right|\le  2^{m/2}.$$
   The result follows after elementary calculations.
   \end{proof}

   A ternary analogue is as follows.

   {\theorem For all integers $m\ge 2$ we have $\Gamma(3^m-1,2m,3)\ge \lceil 2\times 3^{\frac{m-2}{2}}\rceil. $ }
   \begin{proof}
   The dual of the ternary  Melas code is cyclic of parameters $[3^m-1, 3m].$ It is proved in \cite{V2}  that its nonzero weights are of the form $\frac{3^m-1+t}{3}$ with $t\in \Z,$ satisfying $t\equiv 1 \pmod{3},$ and $t^2 <3^m.$
   The result follows after elementary calculations.
   \end{proof}

   It is remarkable that the last two theorems imply lower bounds on $\Gamma(k,q),$
that are exponential in the dimension. It would be desirable to extend these results to $\Gamma(k,q)$ with $q$ a prime power $>3.$

 \subsection{Covering radius}
 Recall that the {\bf covering radius} $\rho(C)$ of a code $C$ is the smallest integer $t$ such that every point in $\F_q^n$ is at distance at most $t$ from some codeword of $C.$
 A combinatorial function that is, as far as we know, new, is $T[n,k,q],$ the largest covering radius of a cyclic code of length $n$ and dimension $k$ over $\F_q.$ Note that the closest classical function in that context is, for $q=2,$ the quantity
 $t[n,k],$ the smallest covering radius of a binary linear code of length $n$ and dimension $k$ \cite{C+}. Trivially $t[n,k]\le T[n,k,2].$
 The Delsarte bound \cite{MS}, stated for the dual of a linear code $C,$ is $\rho(C^\bot)\le s(C)$ \cite[Chap. 6, Th. 21]{MS}. With the above definitions, we can state the following result.

 {\proposition  \label{D} For all integers $n,k$ with $1\le k\le n,$ we have $$\Gamma(n,k,q)\ge T[n,n-k,q].$$}\vspace{-0.9cm}
\begin{proof}
 Upon using Delsarte bound for the dual of an $[n,k]_q$ code with $\Gamma(n,k,q)$ nonzero weights, which is, in particular, an $[n,n-k]_q$ code we see that $\Gamma(n,k,q)\ge T[n,n-k,q].$
 \end{proof}

\section{Asymptotics}
 To consider the number of weights of long codes of given rate, we study the behavior of $\gamma_q(R)$ defined for $0<R<1$ as
 $$\gamma_q(R)=\limsup_{n \rightarrow \infty}\frac{1}{n}\log_q(\Gamma(n,\lfloor Rn\rfloor,q)).$$
 {\theorem For all rates $R\in (0,1)$ we have $$ \gamma_q(R) \le R.$$ In particular, $\gamma_q(R)\le t(q),$ the unique solution in $(0,\frac{q-1}{q})$ of the equation $H_q(x)=x.$}

 \begin{proof}
 The  bound comes from the immediate inequalities $\Gamma(n,k,q)\le \Gamma(k,q)\le q^k-1.$
 \end{proof}

 Similarly for strongly cyclic codes we define
 $$\gamma^0_q(R)=\limsup_{n \rightarrow \infty}\frac{1}{n}\log_q(\Gamma^0(n,\lfloor Rn\rfloor,q)).$$
 We obtain a different upper bound.

 {\theorem For all rates $R\in (0,1)$ we have $$ \gamma^0_q(R) \le \frac{R}{2}.$$ In particular, $\gamma_q(R)\le t^0(q),$ the unique solution in $(0,\frac{q-1}{q})$ of the equation $H_q(x)=\frac{x}{2}.$}

 \begin{proof}
 The bound comes from the immediate inequalities $\Gamma^0(n,k,q)\le \Gamma^0(k,q)$ and Theorem \ref{LN}.
 \end{proof}
 \section{Numerics}

 We conjecture, but cannot prove, based on the figures of Table 1, that some local maxima of $n\mapsto \Gamma(n,k,2),$ for fixed $k$ of the form $k=\frac{s(s+1)}{2}$ are met for codes with check polynomials of the form $\prod_{i=1}^s h_i(x),$ where $h_i$ is irreducible of degree $i.$  Another motivation for the conjecture is that cyclic codes with irreducible check polynomials are one-weight codes in primitive length.
 \begin{center}
\small {Table 1: lower bounds on $\Gamma(k,q)$} \\
\end{center}

\begin{center}
\small
\begin{tabular}{|c|c|c|c|c|}
  \hline
   $\Gamma(k,q)\ge$  & $q$      & $k $    & $n$      & $h(x)$ \\
  \hline 8 & 2 & 6 & 21 & $(1+x)(1+x+x^2)(1+x+x^3)$ \\
  \hline 15 & 2 & 10 & 105 & $(1+x)(1+x+x^2)(1+x+x^3)(1+x+x^4)$\\
  \hline  11& 3& 6 &104&$ (x+1)(x^2+1)(x^3+2x+ 1)$ \\
  \hline 20&3& 10 & 1040& $(x+1)(x^2 +1)(x^3 + 2x + 1)(x^4+x+2)$\\
  \hline 11 &4& 6 & 315& $(x+1)(x^2 + x + w)(x^3+x+1)$\\
  \hline 18 &4& 8 & 315 & $(x+1)(x^2 + x + w)(x^2 + x + w^2)(x^3+x+1)$\\
  \hline
\end{tabular}
\end{center}
What can be noted from Table 1 is that the bound $\Gamma(k,q)\ge k$ of Proposition 1 is weak.

A systematic algorithm to compute $\Gamma(k,q)$ can be sketched as follows.
\begin{enumerate}
\item [(i)] Find all polynomials $h$ of degree $k$ of $\F_q[x],$ such that $h(0)\neq 0$;
\item [(ii)] For each $h$ compute its period $T_h$;
\item [(iii)] Count the number $s_h$ of nonzero weights of the cyclic code of length $T_h$ and check polynomial $h$;
\item [(iv)] Maximize $s_h$ over all $h$'s in Step (i).
\end{enumerate}
We justify this algorithm as follows.
\begin{proof}
View $h$ as the check polynomial of an $[n,k]$ cyclic code $C.$  If $h(x)=x^a H(x),$ with $a\ge 1,$ and $H(0)\neq 0,$ then $\gcd(x^n-1,h)=H(x)$ and $C$
has for dimension the degree of $H<k.$ Contradiction. Since $h(0)\neq 0,$ the period of $h$ is well-defined. All cyclic codes with check polynomial $h$ will have lengths a multiple of $T_h,$ and will be repetitions of the code of length $T_h,$
with the same number of weights.
\end{proof}
We illustrate this algorithm by the special case $k=3,q=2.$ The polynomials $h$ can take the following values
\begin{enumerate}
 \item[(i)] $x^3+x+1,\,x^3+x^2+1,$ when $T=7$ and $s=1$ (Simplex code)
 \item[(ii)] $x^3+1$ when $T=3$ and $s=3$ (Universe code)
 \item[(iii)] $x^3+x^2+x+1=\frac{x^4+1}{x+1}$ when $T=4$ and $s=2$ (Even weight code)
 \end{enumerate}

We conclude that $\Gamma(3,2)=3.$ More generally, we have the following results for $q=2$ and $q=3$. If the number $s_h$ of nonzero weight of the cyclic code of check polynomial $h(x)$ meets $\Gamma(k,q)$, we call $h(x)$ the optimal polynomial. We list some optimal polynomials $h(x)$ as follows. {The coefficients of the polynomial $h(x)$  are written in increasing powers of $x,$ for example for
$k=3,$ the entry $ 1001$ means $1+x^3.$}
\begin{center}
\small {Table 2: exact values of $\Gamma(k,2)$} \\
\end{center}
{
\begin{center}
\footnotesize
\begin{tabular}{|c|c|c|c|c|c|c|c|}
\hline
  & $k=3$ & $k=4$ & $k=5$ & $k=6$ & $k=7$ & $k=8$ & $k=9$  \\
  \hline
  $q=2$ & 3 & 4 & 5 & 8 & 9 & 12 & 16\\
  \hline
  $h(x)$ & (1001) & \begin{tabular}{c}

                         (11011) \\
                         (10001) \\

                       \end{tabular}
   & (100001) & (1110011) & (11100111) & (111101111) & \begin{tabular}{c}

                                                                                  (1100100001) \\
                                                                                  (1000010011) \\

                                                                                 \end{tabular}
   \\
   \hline
\end{tabular}
\end{center}}

{
\begin{center}
\small {Table 3: exact values of $\Gamma(k,3)$} \\
\end{center}
\begin{center}
\small
\begin{tabular}{|c|c|c|c|c|}
\hline
  & $k=3$ & $k=4$ & $k=5$ & $k=6$  \\
  \hline
  $q=3$ & 3 & 5 & 8 & 12 \\
  \hline
  $h(x)$ & \begin{tabular}{c}

              (2211) \\
              (2121) \\
              (2101)\\
              (2021) \\
               (2001)\\
               (1221)\\
               (1111)\\
               (1101)\\
               (1011)\\
               (1001)\\

            \end{tabular}
   & \begin{tabular}{c}

        (20221)\\
        (22011)\\
        (21021)\\
        (21011)\\
        (12111)\\
        (11121)\\
       (10201)\\
        (10101)\\

     \end{tabular}
    & \begin{tabular}{c}

        (222021)\\
        (211101) \\
         (210111)\\
         (202221)\\
         (121011)\\
         (112101)\\
         (110121)\\
         (101211)\\

      \end{tabular}
     & \begin{tabular}{c}

         (1201101)\\
          (1102101)\\
          (1012011)\\
          (1011021)\\

       \end{tabular}
      \\
   \hline
\end{tabular}
\end{center}}

\section{Conclusion and open problems}
In this paper, we have studied the largest number of distinct nonzero weights a cyclic code of given length and dimension could have. We have derived some upper bounds on that quantity that seem especially sharp for irreducible cyclic codes. Lower bounds appear weak so far, being mostly linear in $k,$ when the upper bounds are exponential. The Melas and Zetterberg codes provided lower bounds exponential in the dimension and it is worth extending these bounds to other range of parameters. The results on the weights of $q$-ary Reed-Muller and Hamming codes, while of interest in their own right only provide lower bounds that are polynomial in the dimension.


 So sharpening the lower bounds is the main open problem. Finding a pattern in the local maxima of $n \mapsto \Gamma(n,k,q)$ by running extensively the algorithm of the last section for large $n$'s might help. This programming effort could lead to a table of the function $\Gamma(k,q)$ for modest values of $kq,$ let us say $kq \le 100$ for instance.

 \noindent{\bf Acknowledgements.} The paper is dedicated to the memory of our friend and mentor G.D. Cohen (1951--2018). The authors wish to thank Professor C. Ding for helpful discussions.
\section{Appendix I:  Reed-Muller codes}
In both the appendices, the symbol $wt(C)$ denotes the set of weights of the code $C$, i.e.
$$wt(C)=\{i \in \mathbb N \mid \exists \, \mathbf{c} \in C \mbox{ s.t. } w_H(\mathbf{c})=i\}.$$
Moreover, let $t$ be a positive integer. Given $t$ sets of integers $A_1,\ldots, A_t$ we define the set of sums
$$\op_{i=1}^t A_i:=\left\{a_1+a_2+\ldots+a_t \mid a_i \in A_i\right\}. $$

Let $m$ be a positive integer and $\Fq$ be a finite field. Consider $R_m:=\Fq[x_1,\ldots, x_m]$ the ring of polynomials in $m$ variables over $\Fq$.
Moreover, list all the points of $\Fq^m$ as $\mathbf{P_1}, \ldots, \mathbf{P_n}$, where $n=q^m$, and consider the evaluation map
$$\begin{array}{rcl} \mathrm{ev}_m:  R_m & \longrightarrow & \Fq^n \\
f & \longmapsto & (f(\mathbf{P_1}), \ldots, f(\mathbf{P_n}) ). \end{array}$$

\begin{definition}
 Let $r,m$ be positive integers such that $0\leq r \leq (q-1)m$. The \textbf{$q$-ary Reed-Muller code of order $r$ in $m$ variables} is  defined by
$$\RM_q(r,m):=\left\{ \mathrm{ev}_m(f) \mid f\in R_m, \deg(f)\leq r\right\}.$$
\end{definition}

Observe that the choice of the order of the points $\mathbf{P_i}$ of $\Fq^m$ does not matter. Indeed, different choices lead to equivalent codes. However, it is possible to define the order in a smart way, such that we have an inductive construction of the $q$-ary Reed-Muller codes.
Let $\gamma$ be a primitive element of $\Fq$. For $m=1$ we consider $P_1=0$ and $P_i=\gamma^{i-1}$ for $i=2,\ldots,q$. Inductively, if $\{\mathbf{P_1},\ldots,\mathbf{P_n}\}$ is the set of points chosen for $n=q^m$, then for $m+1$ we choose $\{\mathbf{P_1'},\ldots, \mathbf{P_N'}\}$, where $N=q^{m+1}$, as follows.
$$\begin{array}{ll}
\mathbf{P_i'}=(\mathbf{P_i}, 0) &\mbox{ for } 1\leq i \leq n .\\
\mathbf{P_{nj+i}'}=(\bf{P_i}, \gamma^j) & \mbox{ for } 1\leq i \leq n, 1\leq j \leq q-1.
\end{array}$$

With this choice of the order of the points, one can prove the following result whose proof is omitted since it can be found in \cite{BN}.
For the rest of this section, for any integers $s,t$, we denote by $G_q(s,t)$  a generator matrix for the $q$-ary-Reed Muller code of order $s$ in $t$ variables.

\begin{proposition}\label{prop:recursiveRMmatrix}
 There exist $\lambda_{i,j}\in \Fq$ for $1\leq i < j \leq q$ such that
$$G_q(r,m+1)=\begin{pmatrix} G_q(r,m) & \lambda_{1,2}G_q(r,m) & \ldots & \lambda_{1,q}G_q(r,m) \\
0 & \ddots & & \vdots\\
\vdots &  & G_q(r-q+2,m) & \lambda_{q-1,q}G_q(r-q+2,m) \\
0&&0& G_q(r-q+1,m)\end{pmatrix}.$$
\end{proposition}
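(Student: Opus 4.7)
The plan is to exhibit an explicit basis for $\RM_q(r, m+1)$ consisting of products $h_k(x_{m+1})\, g(x_1, \ldots, x_m)$, where each $h_k \in \F_q[x_{m+1}]$ is a Lagrange-type polynomial chosen so that the induced evaluation matrix is block upper triangular; the scalars $\lambda_{i,j}$ will then fall out as the values of the $h_k$ at the elements of $\F_q$.

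First, I would list the elements of $\F_q$ in the order used by the inductive listing of the points: set $y_1 := 0$ and $y_{j+1} := \gamma^j$ for $j=1,\ldots,q-1$, so that the points $\mathbf{P_s'}$ of $\F_q^{m+1}$ are arranged in $q$ consecutive blocks of length $n=q^m$, with the $j$-th block consisting of $\{(\mathbf{P_i}, y_j) : 1 \le i \le n\}$. For each $k=1,\ldots,q$, define
$$h_k(x) := \prod_{j=1}^{k-1}\frac{x-y_j}{y_k-y_j},$$
a polynomial of degree $k-1$ with $h_k(y_j)=0$ for $j<k$ and $h_k(y_k)=1$; then set $\lambda_{k,j} := h_k(y_j)$ for $k<j\le q$.

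Next, for each $k$ I would fix the basis of $\RM_q(r-k+1, m)$ whose evaluations give the rows of $G_q(r-k+1, m)$, and form the polynomials $h_k(x_{m+1})\, g(x_1, \ldots, x_m)$ as $g$ ranges over this basis. Each such polynomial has total degree at most $(k-1) + (r-k+1) = r$, hence lies in $\RM_q(r, m+1)$. Its evaluation at the point $(\mathbf{P_i}, y_j)$ equals $h_k(y_j)\, g(\mathbf{P_i})$; splitting the codeword into $q$ length-$n$ blocks indexed by $j$, its $j$-th block is $h_k(y_j)$ times the row of $G_q(r-k+1, m)$ corresponding to $g$. Stacking these rows over $k=1,\ldots,q$ produces exactly the block matrix displayed in the proposition, with $G_q(r-k+1,m)$ on the diagonal, $\lambda_{k,j}\, G_q(r-k+1, m)$ strictly above it, and zeros strictly below.

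It then remains to show that this block matrix is a generator matrix for $\RM_q(r, m+1)$. Its total number of rows is $\sum_{k=1}^{q} \dim\RM_q(r-k+1, m)$, and this equals $\dim\RM_q(r, m+1)$ by partitioning the reduced monomials $x_1^{a_1}\cdots x_{m+1}^{a_{m+1}}$ (with $0\le a_i\le q-1$ and $\sum_i a_i \le r$) according to the value of $a_{m+1}$. Being block upper triangular with full-row-rank diagonal blocks, the matrix has full row rank, so its rows form a basis. The main obstacle is mostly bookkeeping: carefully aligning the row/column block indices with the inductive ordering of $\mathbf{P_s'}$ and the $y_j$'s. Once the Lagrange polynomials $h_k$ are in place, the block structure announced in the proposition is essentially immediate.
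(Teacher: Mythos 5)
Your proof is correct. Note, however, that the paper does not actually prove this proposition: it is stated with the remark that the proof is omitted because it can be found in the reference on matrix product codes (Blackmore--Norton), so there is no in-paper argument to compare against. What you have written --- taking the Lagrange-type polynomials $h_k(x_{m+1})=\prod_{j<k}(x_{m+1}-y_j)/(y_k-y_j)$, multiplying them against bases of $\RM_q(r-k+1,m)$, reading off the block upper triangular evaluation matrix from the inductive ordering of the points, and closing with a rank-plus-dimension count --- is precisely the standard matrix-product-code realization of $q$-ary Reed--Muller codes that underlies the cited reference, so you have supplied a correct, self-contained proof of a statement the paper only quotes. Two small points worth making explicit in a final write-up: (i) for indices $k$ with $r-k+1<0$ the block $G_q(r-k+1,m)$ must be read as an empty matrix (the zero code), and for $r-k+1>(q-1)m$ as a generator matrix of the full space $\F_q^{q^m}$; your degree bound and the identity $\sum_{k=1}^{q}\dim\RM_q(r-k+1,m)=\dim\RM_q(r,m+1)$ still hold with the convention $B(q,m,i)=0$ outside $0\le i\le (q-1)m$; (ii) your dimension count invokes the fact that $\dim\RM_q(s,t)$ equals the number of reduced monomials of degree at most $s$, which in the paper appears only afterwards as part (2) of Proposition \ref{prop:RMstructure}; that fact is independent of the present proposition (it follows from injectivity of the evaluation map on reduced polynomials), so there is no circularity, but it deserves a sentence saying so.
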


\begin{remark}
 Observe that, in the case $m=1$, the Reed-Muller code $\RM_q(r,1)$ is simply the extended Reed-Solomon code of length $q$ and dimension $r+1$. It is well-known that in this case $wt(\RM_q(r,1))=\{q-r,q-r+1,\ldots, q\}$ (see \cite[Theorem 6]{EGS}).
\end{remark}

Let now denote by $n_q(r,m)$, $k_q(r,m)$ and $d_q(r,m)$ respectively the length, dimension and minimum distance of the Reed-Muller code $R_q(r,m)$. The following result can be found in \cite{BN}, and explains more about the structure of this family of codes.

\begin{proposition}\label{prop:RMstructure}
 Let $r,m$ be integers such that $0\leq r \leq (q-1)m$. Then
\begin{enumerate}
\item [(1)] $n(r,m)=q^m$.
\item [(2)] $k(r,m)= \sum_{i=0}^r B(q,m,i) $, where $B(q,m,i)$ denotes the coefficient of $z^i$ in the polynomial $(1+z+\ldots+z^{q-1})^m$.
\item [(3)] $d(r,m)=(q-S)q^{m-1-Q}$, where $r=Q(q-1)+S$ with $0\leq S \leq q-2$.
 \item [(4)] $\RM_q(r,m)^\perp=\RM_q(m(q-1)-r-1,m)$, and therefore $G_q(r,m)$ is a parity check matrix for the code $\RM_q(m(q-1)-r-1,m)$.
\end{enumerate}
\end{proposition}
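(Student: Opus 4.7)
The plan is to address the four parts in order. All of (2), (3), and (4) rest on the same observation that on $\F_q^m$ one has $x_i^q=x_i$, so every polynomial function is represented by a ``reduced'' monomial combination with exponents $0\le a_i\le q-1$; I would set this up once at the start. Part (1) is then immediate from $n=|\F_q^m|=q^m$.

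For part (2), the quotient ring $\F_q[x_1,\ldots,x_m]/(x_1^q-x_1,\ldots,x_m^q-x_m)$ is isomorphic, via the evaluation map, to the function ring $\F_q^{\F_q^m}$; the $q^m$ reduced monomials form a basis of the quotient, so they give a basis of $\F_q^{q^m}$ under $\ev_m$. In particular the reduced monomials of total degree at most $r$ are linearly independent, so $k_q(r,m)$ equals their number. A generating-function identity identifies this count with $\sum_{i=0}^{r} B(q,m,i)$, since $(1+z+\cdots+z^{q-1})^m$ enumerates reduced exponent tuples by total degree.

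Part (3) is the main obstacle. Write $r=Q(q-1)+S$ with $0\le S\le q-2$. For the upper bound $d_q(r,m)\le (q-S)q^{m-1-Q}$ I would exhibit the test polynomial
$$f=\prod_{i=1}^{Q}\bigl(1-x_i^{q-1}\bigr)\cdot\prod_{j=1}^{S}\bigl(x_{Q+1}-\gamma^j\bigr),$$
of total degree $r$: the first factor selects $x_1=\cdots=x_Q=0$, the second excludes $x_{Q+1}\in\{\gamma^1,\ldots,\gamma^S\}$, while $x_{Q+2},\ldots,x_m$ are free, giving support of size $(q-S)q^{m-1-Q}$. The matching lower bound is where the real work lies. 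The plan is induction on $m$ using the block decomposition of Proposition \ref{prop:recursiveRMmatrix}: a codeword $\ev_m(f)$ splits as a concatenation of $q$ length-$q^{m-1}$ subblocks obtained by fixing $x_m$ to each value in $\F_q$, and each subblock is a codeword of $\RM_q(r,m-1)$ (or of the smaller code $\RM_q(r-(q-1),m-1)$ after extracting the top $x_m$-degree). A careful bookkeeping of how many subblocks can vanish, combined with the inductive minimum distances, pulls out the extra factor $(q-S)/q$ over $d_q(r,m-1)$. The $q$-ary case analysis keyed on the residue $S$ is the painful step and would occupy the bulk of the proof.

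For part (4), the key identity is
$$\sum_{x\in\F_q}x^a=\begin{cases}-1&\text{if }a\ge 1\text{ and }(q-1)\mid a,\\ 0&\text{otherwise,}\end{cases}$$
(with $0^0:=1$). The product structure of $\F_q^m$ then gives $\sum_{P\in\F_q^m}P^{\mathbf a}=0$ as soon as some component of $\mathbf a$ is strictly less than $q-1$, in particular whenever $|\mathbf a|<m(q-1)$. Hence if $\deg f\le r$ and $\deg g\le m(q-1)-r-1$, every monomial of $fg$ has total degree at most $m(q-1)-1$, so $\langle\ev_m(f),\ev_m(g)\rangle=0$, establishing $\RM_q(m(q-1)-r-1,m)\subseteq \RM_q(r,m)^\perp$. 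Equality then follows by matching dimensions: the palindromic symmetry $B(q,m,i)=B(q,m,m(q-1)-i)$ of the coefficients of $(1+z+\cdots+z^{q-1})^m$, combined with part (2), yields $k_q(r,m)+k_q(m(q-1)-r-1,m)=q^m$.
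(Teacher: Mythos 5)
The paper does not prove this proposition at all --- it is quoted from \cite{BN} --- so there is no in-text argument to compare yours against; it has to stand on its own. Parts (1), (2) and (4) of your proposal are complete and correct: the reduction modulo $x_i^q-x_i$, the count of reduced monomials of degree at most $r$ via the coefficients of $(1+z+\cdots+z^{q-1})^m$, the vanishing of $\sum_{P\in\F_q^m}P^{\mathbf a}$ unless every exponent is a positive multiple of $q-1$, and the dimension count from the palindromic symmetry $B(q,m,i)=B(q,m,m(q-1)-i)$ are all correctly executed. The test polynomial in part (3) also correctly yields the upper bound $d_q(r,m)\le (q-S)q^{m-1-Q}$.

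The gap is the lower bound in part (3), which is the only genuinely hard claim in the proposition (it is the Kasami--Lin--Peterson theorem on generalized Reed--Muller codes). Your sketch --- split $\ev_m(f)$ into $q$ subblocks by fixing $x_m$, note each lies in $\RM_q(r,m-1)$, and ``pull out the factor $(q-S)/q$ by careful bookkeeping'' --- does not yet contain the idea that makes this work. The naive consequence of the block decomposition is only $d_q(r,m)\ge d_q(r,m-1)$ (at least one subblock is nonzero), which is far too weak. To do better one must control, for a nonzero reduced $f$ of degree at most $r$, how many values $c\in\F_q$ can satisfy $f(\cdot,c)\equiv 0$, and trade each such vanishing against a drop in the degree available to the surviving sections; this is a genuine case analysis keyed on $S$ and on whether $r\ge q-1$, with degenerate cases when $r>(m-1)(q-1)$, where $\RM_q(r,m-1)$ is the whole ambient space and the inductive hypothesis gives nothing. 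As written, the crux of part (3) is asserted rather than proved; either carry out that induction in full or do as the paper does and cite the result.
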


\begin{corollary}\label{cor:indRMweights}
 Let  $r,m $ be integers such that $m\geq 1$ and $0\leq r \leq m(q-1)$. Let $blue{\mathbf{u_1}}, \ldots, \mathbf{u_q} \in \RM_q((m-1)(q-1)-r-1,m-1)$. Then
$(\mathbf{u_1} \mid \ldots \mid \mathbf{u_q}) \in \RM_q(m(q-1)-r-1,m)$.
\end{corollary}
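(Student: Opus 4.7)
The plan is to exploit the duality statement in Proposition \ref{prop:RMstructure}(4) combined with the recursive block form of the generator matrix in Proposition \ref{prop:recursiveRMmatrix}. Since $\RM_q(m(q-1)-r-1,m)=\RM_q(r,m)^\perp$, the matrix $G_q(r,m)$ is a parity check matrix for $\RM_q(m(q-1)-r-1,m)$. Hence, to prove the corollary, it suffices to check that
\[
G_q(r,m)\,(\mathbf{u_1}\mid\ldots\mid\mathbf{u_q})^T = 0.
\]

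Applying Proposition \ref{prop:recursiveRMmatrix} with $m$ in place of $m+1$, I would write $G_q(r,m)$ as the upper-triangular block matrix whose diagonal blocks are $G_q(r,m-1),\, G_q(r-1,m-1),\,\ldots,\, G_q(r-q+1,m-1)$, and whose off-diagonal blocks are scalar multiples of the same. Computing the $i$-th block row of $G_q(r,m)\,(\mathbf{u_1}\mid\ldots\mid\mathbf{u_q})^T$ then yields a linear combination of vectors of the form $G_q(r-i+1,m-1)\,\mathbf{u_j}^T$ with $j\geq i$.

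To finish, I would invoke the nesting $\RM_q(s,m-1)\subseteq \RM_q(r,m-1)$ for every $s\leq r$, which gives the reverse inclusion on duals: $\RM_q((m-1)(q-1)-r-1,m-1)\subseteq \RM_q((m-1)(q-1)-s-1,m-1)$. By Proposition \ref{prop:RMstructure}(4) this last set is exactly the kernel of $G_q(s,m-1)$, so for each hypothesis codeword $\mathbf{u_j}\in \RM_q((m-1)(q-1)-r-1,m-1)$ and each $s\leq r$ one has $G_q(s,m-1)\,\mathbf{u_j}^T=0$. Applying this with $s=r-i+1\leq r$ makes every block row of $G_q(r,m)\,(\mathbf{u_1}\mid\ldots\mid\mathbf{u_q})^T$ vanish, as required.

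The only real obstacle is bookkeeping on the boundary: one must agree on conventions for $\RM_q(t,m-1)$ and $G_q(t,m-1)$ when $t<0$ (empty matrix, equivalently the trivial code $\{0\}$) so that the recursive block decomposition of Proposition \ref{prop:recursiveRMmatrix} still makes sense for the extremal values of $r$ allowed in the statement. Once this convention is fixed, the argument above is essentially a one-line matrix computation followed by an appeal to nesting and duality.
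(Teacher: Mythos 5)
Your proposal is correct and follows essentially the same route as the paper's own proof: both reduce the claim to checking $G_q(r,m)(\mathbf{u_1}\mid\ldots\mid\mathbf{u_q})^T=0$ via the duality in Proposition \ref{prop:RMstructure}(4), use the block decomposition of Proposition \ref{prop:recursiveRMmatrix}, and kill each block row by the nesting $\RM_q((m-1)(q-1)-r-1,m-1)\subseteq \RM_q((m-1)(q-1)-s-1,m-1)$ for $s\le r$. Your remark about conventions for negative orders is a minor point the paper leaves implicit but does not change the argument.
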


\begin{proof}
 Let $\mathbf{u_1}, \ldots, \mathbf{u_q} \in \RM_q((m-1)(q-1)-r-1,m-1)$. By Proposition \ref{prop:RMstructure} we know that  $\RM_q((m-1)(q-1)-r-1,m-1)^\perp=\RM_q(r,m-1)$, and therefore, $G_q(r,m-1)\mathbf{u_i}^T=0$ for $i=1,\ldots,q$. Moreover, for every $s\leq r$ we have $\RM_q(r,m-1)\supseteq \RM_q(s,m-1)$. This implies, by  part (4) of  Proposition \ref{prop:RMstructure}, that $ \RM_q((m-1)(q-1)-r-1,m-1)\subseteq \RM_q((m-1)(q-1)-s-1,m-1)$. Hence, we also have that $G_q(s,m-1)\mathbf{u_i}^T=0$.
Using the characterization given in Proposition \ref{prop:recursiveRMmatrix}, we get that $G_q(r,m)(\mathbf{u_1}\mid \ldots \mid \mathbf{u_q})^T=0$ and this completes the proof.
\end{proof}

As a direct consequence of Corollary \ref{cor:indRMweights}, we get the following result.

\begin{corollary}\label{cor:sumweightset} Let  $r,m $ be integers such that $m\geq 1$ and $0\leq r \leq m(q-1)$. Then
 $$wt(\RM_q(m(q-1)-r-1,m))\supseteq \bigoplus_{i=1}^q wt(\RM_q((m-1)(q-1)-r-1,m-1)).$$
\end{corollary}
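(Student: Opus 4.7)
The plan is to derive this inclusion as an immediate consequence of Corollary \ref{cor:indRMweights}, exploiting the fact that the Hamming weight is additive with respect to concatenation of vectors. There is really only one idea needed, so I would keep the argument short.

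First I would fix an arbitrary element $w$ of the right-hand side, which by definition of the Minkowski-type sum $\bigoplus_{i=1}^q$ can be written as $w = w_1 + w_2 + \cdots + w_q$ with each $w_i \in wt(\RM_q((m-1)(q-1)-r-1,m-1))$. By the very definition of the weight set, for each $i$ there exists a codeword $\mathbf{u_i} \in \RM_q((m-1)(q-1)-r-1,m-1)$ with $w_H(\mathbf{u_i})=w_i$.

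Next I would form the concatenation $\mathbf{v} := (\mathbf{u_1}\mid \mathbf{u_2}\mid \ldots \mid \mathbf{u_q})$. Since the supports of the $q$ blocks are disjoint inside the ambient coordinate positions, the Hamming weight is additive, giving
$$w_H(\mathbf{v}) = \sum_{i=1}^q w_H(\mathbf{u_i}) = \sum_{i=1}^q w_i = w.$$
By Corollary \ref{cor:indRMweights}, the vector $\mathbf{v}$ lies in $\RM_q(m(q-1)-r-1,m)$, so $w \in wt(\RM_q(m(q-1)-r-1,m))$. Since $w$ was arbitrary, the stated inclusion follows.

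There is no real obstacle here: the only ingredient beyond set-theoretic unpacking is the additivity of Hamming weight under concatenation, which is immediate, together with the membership statement already established in Corollary \ref{cor:indRMweights}. The one small thing worth double-checking is that the choice of coordinate ordering used to identify $\RM_q(m(q-1)-r-1,m)$ with the concatenation structure (as set up before Proposition \ref{prop:recursiveRMmatrix}) is consistent with the one used in Corollary \ref{cor:indRMweights}, but this is guaranteed by the inductive construction of the evaluation points.
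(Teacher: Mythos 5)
Your argument is correct and is exactly the reasoning the paper intends: the paper states Corollary \ref{cor:sumweightset} as a ``direct consequence'' of Corollary \ref{cor:indRMweights}, and your unpacking via the additivity of Hamming weight under concatenation is precisely that direct consequence. Nothing is missing.
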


We can finally give a general statement on the set of weights for some classes of $q$-ary Reed-Muller codes.

\begin{theorem}\label{thm:weightsRM}
For every positive integer $m$ and every integer $r$ such that   $0\leq r\leq \frac{q-3}{2}$, it holds that
$$wt(\RM_q(m(q-1)-r-1,m))=\{0,r+2,r+3,\ldots, q^m\}.$$
\end{theorem}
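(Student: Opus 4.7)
The plan is to induct on $m$, keeping $r$ fixed throughout. For the base case $m=1$, the Remark preceding Proposition \ref{prop:recursiveRMmatrix} identifies $\RM_q(q-r-2,1)$ with an extended Reed--Solomon code of length $q$ whose nonzero weights are $\{q-(q-r-2),\ldots,q\}=\{r+2,r+3,\ldots,q\}$; together with the zero codeword this gives exactly $\{0,r+2,\ldots,q\}$, as required.

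For the inductive step $(m\geq 2)$, I would establish the two inclusions separately. For the forward inclusion $\supseteq$, Corollary \ref{cor:sumweightset} applied to the parameters $(r,m)$ yields
$$wt(\RM_q(m(q-1)-r-1,m))\supseteq \bigoplus_{i=1}^q wt(\RM_q((m-1)(q-1)-r-1,m-1)),$$
and the inductive hypothesis identifies the right-hand side as the $q$-fold Minkowski sum of $A:=\{0\}\cup\{r+2,r+3,\ldots,q^{m-1}\}$. It then suffices to prove, by a short inner induction on $j$, that $\bigoplus_{i=1}^j A=\{0\}\cup\{r+2,\ldots,jq^{m-1}\}$ for $1\leq j\leq q$: adding one more copy of $A$ enlarges such a set to $\{0\}\cup [r+2,jq^{m-1}]\cup [2(r+2),(j+1)q^{m-1}]$, and the two nonzero intervals merge into $[r+2,(j+1)q^{m-1}]$ as soon as $2(r+2)\leq jq^{m-1}+1$. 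This overlap condition is guaranteed by the hypothesis $r\leq(q-3)/2$, which gives $2(r+2)\leq q+1$, combined with $q^{m-1}\geq q$ from $m\geq 2$. Taking $j=q$ yields the desired set $\{0,r+2,\ldots,q^m\}$.

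For the reverse inclusion $\subseteq$, I would use the minimum-distance formula from part~(3) of Proposition \ref{prop:RMstructure}. Writing $m(q-1)-r-1=(m-1)(q-1)+(q-r-2)$ identifies the quotient $Q=m-1$ and remainder $S=q-r-2$; the range $0\leq r\leq(q-3)/2$ forces $0\leq S\leq q-2$, so the formula applies and gives $d=(q-S)q^{m-1-Q}=r+2$. This rules out every nonzero weight in $\{1,\ldots,r+1\}$, while trivially every weight is at most $n=q^m$. Combining the two inclusions yields the claimed equality. The main obstacle is the sumset computation in the forward direction: the hypothesis $r\leq(q-3)/2$ is precisely the quantitative bound needed so that two shifts of the interval $[r+2,q^{m-1}]$ overlap, and any relaxation would introduce genuine gaps in the weight set that would require a substantially more delicate combinatorial analysis.
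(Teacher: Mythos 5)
Your proposal is correct and follows essentially the same route as the paper: induction on $m$ with the extended Reed--Solomon case as the base, Corollary \ref{cor:sumweightset} plus the inductive hypothesis for the inclusion $\supseteq$, and the minimum-distance formula $d=r+2$ from part (3) of Proposition \ref{prop:RMstructure} for the reverse inclusion. The only (immaterial) difference is in verifying the sumset identity $\bigoplus_{i=1}^q\{0,r+2,\ldots,q^{m-1}\}=\{0,r+2,\ldots,q^m\}$, where the paper writes each target value via Euclidean division by $q^{m-1}$ and splits into two cases on the remainder, while you run an inner induction on the number of summands and merge overlapping intervals using the same inequality $q\geq 2r+3$.
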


\begin{proof}
 We proceed by induction on $m$. For $m=1$ we get that $\RM_q(q-r-2,1)$ is the Reed-Solomon code of length $q$ and dimension $q-r-1$. Therefore, $wt(\RM_q(q-r-2,1))=\{r+2,r+3,\ldots, q\}$ by \cite[Theorem 6]{EGS}.

Suppose now it is true for $m-1$ and we want to prove the statement for $m$. Let $n=q^{m-1}$ and $N=q^m=qn$. We already know  by part (3) of Proposition \ref{prop:RMstructure}, that the minimum distance is equal to $r+2$. Moreover, by Corollary \ref{cor:sumweightset} and inductive hypothesis, we have that
\begin{align*}
wt(\RM_q(m(q-1)-r-1,m)) &\supseteq \bigoplus_{i=1}^q\{0,r+2,\ldots, n\}.
\end{align*}
  Let $x \in \left\{0,r+2,\dots, qn\right\}$, we need to prove that we can write $x=x_0+x_1+\dots+x_{q-1}$ with $x_i \in \{0,r+2,\ldots, n\}$. By Euclidean division, we have $x=an+b$, with $0 \leq a <q$ and $0 \leq b <n$. At this point we distinguish two cases. \\
\underline{Case 1: If $b \in \{0,r+2,\dots, n-1\}$}, then we choose $x_0=b$, $x_1=\dots =x_a=n$ and $x_{a+1}=\dots =x_{q-1}=0$.\\
\underline{Case 2: If $b \in\{1,\dots, r+1\}$}, then, necessarily $a \geq 1$. By hypothesis we have $n \geq q \geq 2r+3$. Therefore, $n-r-2+b \in \{0,r+2,\dots, n\}$ and we choose $x_0=r+2$, $x_1=n-r-2+b$, $x_2=\dots=x_a=n$ and $x_{a+1}=\dots=x_{q-1}=0$. This concludes the proof.
\end{proof}

\subsection{Binary Reed-Muller codes}
Here we provide an additional result not covered by Theorem \ref{thm:weightsRM} for binary Reed-Muller codes.

\begin{theorem}\label{thm:bin} Let $m$ be a positive integer, then we have
\begin{enumerate}
\item [(1)] If $m\geq 3,$ then
$wt(\RM_2(m-2,m))=\{0,2,4,\ldots,2^m\}\setminus\{2,2^m-2\}$.
\item [(2)] If $m\geq 6,$ then
$wt(\RM_2(m-3,m))\supseteq\{0,2,4,\ldots,2^m\}\setminus\{2,4,6,{10},2^m-2,2^m-4,2^m-6,{2^m-10}\}$.
\end{enumerate}
\end{theorem}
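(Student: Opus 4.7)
\emph{Part (1).} I would use the MacWilliams identity applied to the dual. By part~(4) of Proposition~\ref{prop:RMstructure}, $\RM_2(m-2,m)^\perp=\RM_2(1,m)$, the first-order Reed--Muller code, which classically has weight enumerator $W(x,y)=x^N+(2N-2)x^{N/2}y^{N/2}+y^N$ with $N=2^m$. Applying MacWilliams and collecting terms gives $A_i=0$ for odd $i$, and for $i=2j$ the closed formula
\[
A_{2j}=\frac{1}{N}\left[\binom{N}{2j}+(N-1)(-1)^j\binom{N/2}{j}\right].
\]
Substituting $j\in\{0,1,N/2-1,N/2\}$ recovers $A_0=A_N=1$ and $A_2=A_{N-2}=0$, which accounts precisely for the two excluded weights. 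For every other even $i=2j$ I would argue $A_{2j}>0$: when $j$ is even both summands are positive; when $j$ is odd (forcing $3\le j\le N/2-3$, hence $m\ge 4$), I would combine the Vandermonde inequality $\binom{N}{2j}\ge\binom{N/2}{j}^{2}$ with the crude estimate $\binom{N/2}{j}\ge\binom{N/2}{3}>N-1$, which is easily verified in this range.

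\emph{Part (2).} I would proceed by induction on $m$ starting at $m=6$. The base case amounts to checking that the $25$ even weights in $\{0,8,12,14,16,18,\ldots,50,52,56,64\}$ are all realised in the $[64,42,8]$ code $\RM_2(3,6)$; this can be verified by computing the weight enumerator of $\RM_2(3,6)$ via MacWilliams from the classical Berlekamp--Sloane weight distribution of $\RM_2(2,6)$. For the inductive step $m\to m+1$ with $m\ge 6$, I would invoke the $(u\mid u+v)$ decomposition from Proposition~\ref{prop:recursiveRMmatrix} with $q=2$:
\[
\RM_2(m-2,\,m+1)=\bigl\{(u\mid u+v):u\in\RM_2(m-2,m),\,v\in\RM_2(m-3,m)\bigr\}.
\]
Setting $u=\mathbf{0}$ produces codewords of every weight in $wt(\RM_2(m-3,m))$, and since $\mathbf{1}\in\RM_2(m-2,m+1)$, complementation yields every weight $2^{m+1}-b$ as well. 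By the inductive hypothesis, these two families already cover every even weight in $\{0,2,\ldots,2^{m+1}\}$ apart from the eight ``middle'' values $2^m\pm 2,\,2^m\pm 4,\,2^m\pm 6,\,2^m\pm 10$.

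To conclude, by complementation with $\mathbf{1}$ it is enough to exhibit the four upper values $2^m+2,\,2^m+4,\,2^m+6,\,2^m+10$. I would do so by fixing a low-weight codeword $c_w\in\RM_2(m-3,m)$ of weight $w\in\{8,12,14\}$ (the only weights below $16$, classified by Kasami--Tokura and all available by the minimum distance and part~(1)) and setting $v=c_w+\mathbf{1}$, so that $wt(v)=2^m-w$ and $\supp(v)=\F_2^m\setminus\supp(c_w)$. A direct computation yields
\[
wt(u\mid u+v)=2^m-w+2\,|\supp(u)\cap\supp(c_w)|,
\]
so by choosing $u\in\RM_2(m-2,m)$ of weight $4$ or $6$ with a prescribed intersection pattern against $\supp(c_w)$ one hits each of the four target weights. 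The \emph{main obstacle} is exactly this last combinatorial step: one has to exhibit a codeword of the extended Hamming code of weight $4$ or $6$ whose support meets a fixed affine $3$-flat in a prescribed number of points. Since weight-$4$ codewords of $\RM_2(m-2,m)$ are indicator functions of affine $2$-flats of $\F_2^m$ and weight-$6$ codewords are symmetric differences of two $2$-flats sharing exactly one point, this reduces to an elementary affine-incidence exercise in $\F_2^m$, which goes through once $m\ge 6$.
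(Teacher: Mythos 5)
For part (1) your argument is correct, but it takes a genuinely different route from the paper. The paper argues by induction on $m$ (Magma checks at $m=3,4,5$, then the block decomposition of Corollary \ref{cor:sumweightset}, under which weights simply add, gives the containment, and the two excluded weights are ruled out by the minimum distance and the all-ones codeword). You instead compute the entire weight distribution of $\RM_2(m-2,m)$ by MacWilliams from $\RM_2(1,m)$ and prove positivity of
$A_{2j}=\frac{1}{N}\bigl[\binom{N}{2j}+(N-1)(-1)^j\binom{N/2}{j}\bigr]$
directly; the chain $\binom{N}{2j}\ge\binom{N/2}{j}^2\ge\binom{N/2}{j}\binom{N/2}{3}>(N-1)\binom{N/2}{j}$ for odd $j$ with $3\le j\le N/2-3$ is valid. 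Your method needs no computer verification and yields strictly more (the exact enumerator); it parallels the MacWilliams treatment the paper reserves for the binary Hamming code.

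Part (2) has a genuine gap in the inductive step. After setting $u=\mathbf{0}$ and complementing by $\mathbf{1}$ you still owe the eight middle weights $2^m\pm\delta$ with $\delta\in\{2,4,6,10\}$, and your identity $w_H(u\mid u+v)=2^m-w+2\,|\supp(u)\cap\supp(c_w)|$ forces $|\supp(u)\cap\supp(c_w)|=(w+\delta)/2$. Since $w\ge 8$ (the minimum distance of $\RM_2(m-3,m)$), for $\delta\in\{6,10\}$ this intersection must be at least $7$, which exceeds $w_H(u)\le 6$: the construction is arithmetically impossible with $u$ of weight $4$ or $6$. For $\delta=4$ the only admissible choice is $w=8$ with intersection $6=w_H(u)$, i.e.\ a weight-$6$ codeword of $\RM_2(m-2,m)$ supported inside a $3$-flat; but a codeword of $\RM_2(1,m)^\perp$ supported on a $3$-flat restricts to a codeword of $\RM_2(1,3)$, whose weights are $\{0,4,8\}$, so no such codeword exists. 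Only $\delta=2$ survives your scheme, so the "elementary affine-incidence exercise" you defer cannot be carried out as stated. The repair is precisely the paper's Corollary \ref{cor:indRMweights}: the two halves $\mathbf{u_1},\mathbf{u_2}$ may be chosen \emph{independently} in $\RM_2(m-3,m)$ (not tied by $u$ and $u+v$), so weights add with no intersection term, and the sumset $S\oplus S$ of the inductive set $S$ with itself already contains every middle value, e.g.\ $2^m+10=22+(2^m-12)$ and $2^m-10=8+(2^m-18)$, with both summands in $S$ once $m\ge 6$.
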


\begin{proof}
\begin{enumerate}
\item [(1)] We prove it by induction on $m.$ In the  case $m=3,$ a Magma \cite{Ma} computation shows that $wt(\RM_2(1,3))=\{0,2,4,6,8\}\setminus\{2,6\}$, $wt(\RM_2(2,4))=\{0,2,4,6,8,10,12,14,16\}\setminus\{2,14\}$ and $wt(\RM_2(3,5))=\{0,2,4,6,\ldots,32\}\setminus\{2,30\}$.
Now, suppose that the claim is true for $m\geq 3$, by Corollary \ref{cor:sumweightset}, we have $wt(\RM_2(m-1,m+1))\supseteq A$, where $A=\{i+j|i,j\in wt(\RM_2(m-2,m))\}$. It is easy to check that $A=\{0,2,4,\ldots,2^{m+1}\}\setminus\{2,2^{m+1}-2\}$. We know, by minimum distance arguments in Proposition \ref{prop:RMstructure}, that the integer $2$ is not in $wt(\RM_2(m-1,m+1))$. {Moreover, since the all ones vector is a codeword, also $2^m-2$ can not be a weight.} This completes the proof.

\item  [(2)] We also prove it by induction on $m$. In the case $m=6,$ a Magma \cite{Ma} computation shows that $wt(\RM_2(3,6))=\{0,2,4,6,8,\ldots,64\}\setminus\{2,4,6,10,54,58,60,62\}$.
Now, suppose that the claim is true for $m\geq 6$, then $B=\{i+j|i,j\in wt(\RM_2(m-3,m))\}$. It is also easy to check that $B=\{0,2,4,\ldots,2^{m+1}\}\setminus\{2,4,6,10,2^m-2,2^m-4,2^m-6,2^m-10\}$. We know, by minimum distance arguments in Proposition \ref{prop:RMstructure}, that the integers $2, 4$ and 6 are not in $wt(\RM_2(m-1,m+1))$. However, we don't know whether 10 belongs to $wt(\RM_2(m-1,m+1))$. Thus from Corollary \ref{cor:sumweightset}, we have $wt(\RM_2(m-1,m+1))\supseteq B$.
\end{enumerate}This completes the proof.
\end{proof}

\subsection{Ternary Reed-Muller codes}
In this subsection we give an additional result not covered by Theorem \ref{thm:weightsRM} for ternary Reed-Muller codes.

\begin{theorem}
Let $m$ be a positive integer, then we have
\begin{enumerate}
\item [(1)] If  $m\geq 1$, then $wt(\RM_3(2m-2,m))=\{0,3,4,5,\ldots,3^m\}$.

\item [(2)] If  $m\geq 3$, then $wt(\RM_3(2m-3,m))\supseteq \{0,6,8,9,\ldots,3^m\}$.
\end{enumerate}
\end{theorem}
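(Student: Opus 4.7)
Both parts will be proved by induction on $m$ using Corollary \ref{cor:sumweightset}, which for $q=3$ specializes to
$$wt(\RM_3(2m-r-1,m)) \supseteq \bigoplus_{i=1}^3 wt(\RM_3(2(m-1)-r-1,m-1)),$$
taken with $r=1$ for statement (1) and $r=2$ for statement (2). The engine of each step is therefore a threefold sumset computation, and the overall structure closely parallels the proof of Theorem \ref{thm:bin}.

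For (1), the base case $m=1$ is immediate: $\RM_3(0,1)$ is the length-$3$ repetition code, so $wt(\RM_3(0,1))=\{0,3\}$. The corollary cannot bootstrap from $m=1$ to $m=2$ because $\{0,3\}\oplus\{0,3\}\oplus\{0,3\}=\{0,3,6,9\}\subsetneq\{0,3,4,\ldots,9\}$, so I would establish $wt(\RM_3(2,2))=\{0,3,4,\ldots,9\}$ directly using Magma \cite{Ma}. For the inductive step ($m\ge 2$), set $S_m:=\{0,3,4,\ldots,3^m\}$. Given $y\in\{0,3,4,\ldots,3^{m+1}\}$, Euclidean division writes $y=a\cdot 3^m+b$ with $0\le a\le 3$ and $0\le b<3^m$. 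If $b\in\{0\}\cup\{3,4,\ldots,3^m-1\}$ the natural decomposition (some copies of $3^m$, one copy of $b$, padded to three terms with zeros) lies entirely in $S_m$. The only residues requiring an ad hoc fix are $b\in\{1,2\}$, which are settled by identities such as $3^m+1=(3^m-5)+3+3$, $3^m+2=(3^m-4)+3+3$, $2\cdot 3^m+1=(3^m-3)+3^m+4$, and $2\cdot 3^m+2=(3^m-3)+3^m+5$ (every summand lies in $S_m$ for $m\ge 2$). The reverse inclusion $wt(\RM_3(2m-2,m))\subseteq\{0\}\cup\{3,4,\ldots,3^m\}$ is immediate from the length $3^m$ together with the minimum distance $3$, given by Proposition \ref{prop:RMstructure}(3) with $Q=m-1$, $S=0$.

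Part (2) runs on the same rails. The base case is $m=3$; I would verify $wt(\RM_3(3,3))\supseteq\{0,6,8,9,\ldots,27\}$ directly in Magma \cite{Ma}. For the inductive step, set $T_m:=\{0,6,8,9,\ldots,3^m\}$ and aim for $T_m\oplus T_m\oplus T_m\supseteq\{0,6,8,9,\ldots,3^{m+1}\}$. Again Euclidean division by $3^m$ dispatches the generic residues $b\in\{0\}\cup\{6,8,9,\ldots,3^m-1\}$. The exceptional residues are now $b\in\{1,2,3,4,5,7\}$, handled by explicit representations such as $3^m+1=(3^m-7)+8+0$, $3^m+2=(3^m-7)+9+0$, $3^m+3=(3^m-6)+9+0$, $3^m+7=(3^m-6)+13+0$, $2\cdot 3^m+1=3^m+(3^m-7)+8$, and $2\cdot 3^m+5=3^m+(3^m-6)+11$; a short enumeration of the dozen exceptional values shows that each summand lies in $T_m$ for $m\ge 3$, where in particular $3^m-7\ge 20$ is never the forbidden value $7$.

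The main obstacle is not conceptual but arithmetic bookkeeping: for each of the half-dozen exceptional residues in the inductive step one must produce a concrete three-term decomposition with every summand in the current base set, and in part (2) one must additionally verify that no summand is the forbidden value $7$. A secondary but unavoidable feature is the need for a computer check of the base case ($m=2$ for (1), $m=3$ for (2)), since the sumset alone cannot bootstrap the induction from the smallest $m$.
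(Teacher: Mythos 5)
Your proposal is correct and follows essentially the same route as the paper: induction on $m$ via Corollary \ref{cor:sumweightset} with Magma-verified base cases, the only difference being that you explicitly carry out the sumset verification (Euclidean division plus a handful of exceptional residues) that the paper dismisses as ``easy to check,'' and you correctly observe that the induction for part (1) can only start from $m=2$ since $\{0,3\}^{\oplus 3}$ does not fill the interval. Your arithmetic for the exceptional residues checks out, including the verification that no summand equals the forbidden value $7$ in part (2).
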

\begin{proof}
\begin{enumerate}
\item [(1)] We prove it by induction on $m.$ The cases $m=1,2,3$ can be checked in Magma \cite{Ma} to be $wt(\RM_3(0,1))=\{0,3\}$, $wt(\RM_3(2,2))=\{0,3,4,5,6,7,8,9\}$ and $wt(\RM_3(4,3))=\{0,3,4,\ldots,27\}$.
Now, suppose that the claim is true for $m\geq 4$, from Corollary \ref{cor:sumweightset}, we have $wt(\RM_3(2m,m+1))\supseteq\ A=\{i+j+l|i,j,l\in wt(\RM_3(2m-2,m))\}$. It is easy to check $A=\{i+j+l|i,j,l\in wt(\RM_3(2m-2,m))\}=\{0,3,4,5,\ldots,3^{m+1}\}$. According to {part (3) of Proposition \ref{prop:RMstructure}}, $d(2m-2,m)=3$. This completes the proof.

\item [(2)] We also prove it by induction on $m$. The cases $m=3,4$ can be checked in Magma \cite{Ma} to be $wt(\RM_3(3,3))=\{0,6,8,9,\ldots,27\}$, $wt(\RM_3(5,4))=\{0,6,8,9,\ldots,81\}$.
Now, suppose that the claim is true for $m\geq 5$, then $B=\{i+j+l|i,j,l\in wt(\RM_3(2m-3,m))\}=\{0,6,8,9,\ldots,3^{m+1}\}$. However, we don't know whether 7 belongs to $wt(\RM_3(2m-1,m+1))$. Thus, from Corollary \ref{cor:sumweightset}, we have $wt(\RM_3(2m-1,m+1))\supseteq\{0,6,8,9,\ldots,3^{m+1}\}$. According to {part (3) of Proposition \ref{prop:RMstructure}}, $d(2m-3,m)=6$.  This completes the proof.

\end{enumerate}
\vspace{-0.7cm}
\end{proof}

\subsection{Reed-Muller codes over the field $\F_5$}
Here we provide an additional result not covered by Theorem \ref{thm:weightsRM} for  Reed-Muller codes over the field $\F_5$.

\begin{theorem}
Let $m$ be a positive integer, then we have
 $$wt(\RM_5(4m-3,m))=\{0,4,5,6,\ldots,5^m\} \mbox{ for } m\geq 1.$$
\end{theorem}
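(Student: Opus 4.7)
The plan is to proceed by induction on $m$, following the template used earlier in the Appendix for the binary and ternary Reed-Muller families. The inclusion $\supseteq$ will come from Corollary \ref{cor:sumweightset}, while the reverse inclusion will follow from the trivial length upper bound $5^m$ and the minimum distance given by Proposition \ref{prop:RMstructure}(3).

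First I would settle the base cases. For $m=1$, the code $\RM_5(1,1)$ is the extended Reed-Solomon code of length $5$ and dimension $2$, so by the remark following Proposition \ref{prop:recursiveRMmatrix} its weight set is exactly $\{0,4,5\}$, which matches the claim. The more delicate case is $m=2$, which I would verify directly in Magma, obtaining $wt(\RM_5(5,2))=\{0,4,5,6,\ldots,25\}$. This base case cannot be dropped, because the naive $5$-fold sumset $\bigoplus_{i=1}^{5}\{0,4,5\}$ produced by Corollary \ref{cor:sumweightset} from the $m=1$ case misses the weights $6,7,11$ (among others), so one cannot bootstrap the induction directly from Reed-Solomon.

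For the inductive step, assume the statement for some $m\geq 2$ and prove it for $m+1$. Observing that $4(m+1)-3=(m+1)(q-1)-r-1$ with $q=5$ and $r=2$, Corollary \ref{cor:sumweightset} gives
$$wt(\RM_5(4m+1,m+1))\supseteq \bigoplus_{i=1}^{5}\{0,4,5,\ldots,5^m\}.$$
It remains to check that this sumset already fills $\{0,4,5,\ldots,5^{m+1}\}$. Any target $x$ is written $x=an+b$ with $n=5^m$, $0\leq a\leq 5$, $0\leq b<n$: when $b\in\{0,4,5,\ldots,n-1\}$ one takes a summand equal to $b$ together with $a$ copies of $n$; when $b\in\{1,2,3\}$ (which forces $a\geq 1$) one reserves a summand of value $4$, $5$ or $6$ and absorbs the shortfall into a single summand of the form $n-c$ with $c\in\{1,3,4\}$, which lies in $\{4,\ldots,n\}$ precisely because $n=5^m\geq 25$ in the inductive regime. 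Combining this sumset equality with the minimum distance $d_5(4m+1,m+1)=4$ from Proposition \ref{prop:RMstructure}(3) (parameters $Q=m$, $S=1$) and the trivial bound $5^{m+1}$ on weights closes the induction.

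The main obstacle is the base case $m=2$: the sumset machinery cannot lift $m=1$ to $m=2$ because at that scale the gap $\{1,2,3\}$ next to the minimum distance is not yet dwarfed by $n$, and weights like $6,7,11$ fail to appear in the $5$-fold sumset. Once $m=2$ is verified computationally, the density of $\{0,4,5,\ldots,n\}$ is high enough that every subsequent inductive step goes through cleanly.
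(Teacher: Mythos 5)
Your proof is correct and follows essentially the same route as the paper: Magma-verified base cases at $m=1,2$ (the paper checks $m=1$ computationally where you invoke the extended Reed--Solomon description, which is equivalent), the inclusion $\supseteq$ via Corollary \ref{cor:sumweightset}, and the reverse inclusion from $d_5(4m-3,m)=4$ and the length bound; you merely spell out the Euclidean-division sumset computation that the paper asserts without detail, and your remark that the $5$-fold sumset of $\{0,4,5\}$ misses $6,7,11$ correctly explains why the $m=2$ case must be checked separately. (The only blemish is the set $\{1,3,4\}$ in your Case 2, which should be $\{1,2,3\}$ if the reserved summand is $4$, but this does not affect the argument since $n-c\ge 4$ either way.)
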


\begin{proof}

 We also prove it by induction on $m$. The cases $m=1,2$ can be checked with the aid of the software Magma \cite{Ma} to be $wt(\RM_5(1,1))=\{0,4,5\}$ and $wt(\RM_5(5,2))=\{0,4,5,6,\ldots,25\}$.
Now, suppose that the claim is true for $m\geq 3$, then $B=\{i+j+l+s+t|i,j,l,s,t\in wt(\RM_5(4m-3,m))\}=\{0,4,5,6,\ldots,5^{m+1}\}$. From Corollary \ref{cor:sumweightset}, we have $wt(\RM_5(4m+1,m+1))\supseteq\{0,4,5,6,\ldots,5^{m+1}\}$. Finally, from {part (3) of Proposition \ref{prop:RMstructure}}, we have $d(4m-3,m)=4$. This completes the proof.
\end{proof}

\section{Appendix II:  $q$-ary Hamming codes}
\subsection{Block Codes and Hamming Codes}

Let $q$ be a prime power, and let $\F_q$ denote the finite field with $q$ elements. We denote by $\F_q^*=\F_q\setminus\{0\}$ the multiplicative group of $\F_q$. It is well-known that $\F_q^*$ is a cyclic group, i.e. there always exists a generator $\gamma$ such that
$$\F_q=\{ \gamma^i \mid 1\leq i \leq q-1\} \cup \{0\}.$$

For $\mathbf{u}=(u_1,\ldots, u_n), \mathbf{v}=(v_1,\ldots, v_n) \in \F_q^n$ we denote by $\langle \mathbf{u},\mathbf{v}\rangle$ the standard inner product between $\mathbf{u}$ and $\mathbf{v}$, i.e.
$$\langle \mathbf{u},\mathbf{v}\rangle:=\sum_{i=1}^n u_iv_i.$$

Moreover, given two vectors $\mathbf{u} \in \F_q^{n_1}$, $\mathbf{v} \in \F_q^{n_2}$ we will use the notation $( \mathbf{u} \mid \mathbf{v})$ to indicate the vector in $\F_q^{n_1+n_2}$ obtained by concatenating  $u$ and $v$, i.e.
$$(\mathbf{u}\mid \mathbf{v})=(u_1,\ldots, u_{n_1}, v_1, \ldots, v_{n_2}),$$
where $\mathbf{u}=(u_1,\ldots, u_{n_1})$ and $\mathbf{v}=( v_1, \ldots, v_{n_2}).$
With this notation, given  $\mathbf{u_i},\mathbf{v_i} \in \F_q^{n_i}$ for $i=1,\ldots, m$, it trivally holds that
\begin{equation}\label{eq:sumweights}
\quad w_H(\mathbf{u_1} \mid \ldots \mid \mathbf{u_{m}})=\sum_{i=1}^m w_H(\mathbf{u_i}), \quad \mbox{ and }
\end{equation}
\begin{equation}
\langle (\mathbf{u_1} \mid \ldots \mid \mathbf{u_m}), (\mathbf{v_1} \mid \ldots \mid \mathbf{v_m})\rangle=\sum_{i=1}^m \langle \mathbf{u_i}, \mathbf{v_i} \rangle.
\end{equation}


 Suppose $r$ is a positive integer. On the set $\F_q^r\setminus\{\mathbf{0}\}$ we can consider the following equivalence relation:
$$\mathbf{u} \sim \mathbf{v} \Longleftrightarrow \exists \lambda \in \F_q^* \mbox{ such that } \mathbf{u}=\lambda \mathbf{v}.$$
The projective geometry of order $q$ and dimension $r$ is defined as
$$PG(r-1,q)= (\F_q^r \setminus \{\mathbf{0}\})/_\sim.$$
A set of representatives for $PG(r-1,q)$ is given by the set of all nonzero vectors of length $r$ whose first nonzero entries is equal to $1$. For this reason, the space $PG(r-1,q)$ can be embedded in $\F_q^r$. The image of this embedding will be denoted by $P_r(q)$. We will use this different notations in order not to confuse it with the space $PG(r-1,q)$, since we will often use the elements of $P_r(q)$ as elements of $\F_q^r$, where the notion of sum is well-defined. Since the cardinality of $P_r(q)$ is equal to the cardinality of $PG(r-1,q)$, we have that
$$\theta_q(r-1):=\frac{q^r-1}{q-1}=|P_r(q)|.$$
\begin{definition}\label{def:Hamming}
Let $q$ be a prime power, and let $r\geq 2$ be a positive integer. We define the \textbf{$q$-ary Hamming code} $\HH_r(q)$ as the $[\theta_q(r-1), \theta_q(r-1)-r]$ linear code over $\F_q$ whose parity check matrix $H_r$ is obtained by choosing all the vectors of $P_r(q)$ (without repetitions) as columns.
\end{definition}

Observe that in  Definition \ref{def:Hamming} the order of the choice of the columns does not really matter. Indeed, permuting the columns of the parity check matrix of a code gives rise to an equivalent code. All the parameters of a code, such that length, dimension and weight distribution are invariant under equivalence of codes. Therefore, for our purpose of studying the weight set of an Hamming code, we are free to choose the order of the columns of $H_r$ in the way we prefer by Definition \ref{def:Hamming}.

\begin{lemma}\label{lem:recursive PC}
 Let $H_{r-1}$ be the parity check matrix for the Hamming code  $\HH_{r-1}(q)$ as described in Definition \ref{def:Hamming}. Then we can obtain a parity check matrix for $\HH_r(q)$ as
$$H_r=\left(\begin{array}{ccc|ccc|ccc|c|ccc|c}
0 & \cdots &0 & 1 & \cdots & 1 & 1 & \cdots & 1 & \cdots & 1 & \cdots & 1 & 1 \\
\hline
&&&&&&&&&&&&&0\\
& H_{r-1} & & & \gamma  H_{r-1} & & & \gamma^2  H_{r-1} & & \cdots & & \gamma^{q-1} H_{r-1} & & \vdots \\
&&&&&&&&&&&&&0
\end{array} \right), $$
where $\gamma$ is a primitive element of $\F_q$ and $H_1=(1)$.
\end{lemma}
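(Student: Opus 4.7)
The plan is to verify directly that the $\theta_q(r-1)$ columns of the displayed matrix $H_r$ are pairwise distinct and together exhaust the set $P_r(q)$ of projective representatives. By Definition \ref{def:Hamming}, this is the only thing one needs to check for $H_r$ to serve as a parity check matrix of $\HH_r(q)$.

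First I would split every vector $\mathbf{v}\in\F_q^r$ as $\mathbf{v}=(v_1,\mathbf{w})$ with $v_1\in\F_q$ and $\mathbf{w}\in\F_q^{r-1}$, and partition the elements of $P_r(q)$ according to three mutually exclusive cases: (i) $v_1=0$ (forcing $\mathbf{w}\in P_{r-1}(q)$, since the first nonzero entry must equal $1$); (ii) $v_1=1$ and $\mathbf{w}\neq\mathbf{0}$; (iii) $v_1=1$ and $\mathbf{w}=\mathbf{0}$. These three cases should correspond respectively to the first block $\begin{pmatrix}0\\ H_{r-1}\end{pmatrix}$, the $q-1$ middle blocks $\begin{pmatrix}1\\ \gamma^i H_{r-1}\end{pmatrix}$ for $i=1,\ldots,q-1$, and the final column $(1,0,\ldots,0)^T$.

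The key combinatorial ingredient to invoke is that $P_{r-1}(q)$ is a transversal for the scalar action of $\F_q^*$ on $\F_q^{r-1}\setminus\{\mathbf{0}\}$: every nonzero $\mathbf{w}$ admits a unique factorization $\mathbf{w}=\lambda\mathbf{w}'$ with $\lambda\in\F_q^*$ and $\mathbf{w}'\in P_{r-1}(q)$. Since $\gamma$ is primitive and $\gamma^{q-1}=1$, the exponent $i\in\{1,\ldots,q-1\}$ in $\lambda=\gamma^i$ is also unique, which yields both existence (every $\mathbf{v}$ of type (ii) appears as a column) and distinctness across different middle blocks. Distinctness within each block follows from the inductive hypothesis that the columns of $H_{r-1}$ are pairwise distinct elements of $P_{r-1}(q)$, and columns from different top-row groups are distinguished by their first coordinate. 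A quick count $q|P_{r-1}(q)|+1=\theta_q(r-1)$ confirms that the listed columns fill $P_r(q)$ exactly.

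No real obstacle is anticipated: the proof amounts to the observation that $PG(r-1,q)$ decomposes into a hyperplane at infinity ($v_1=0$) together with an affine chart ($v_1=1$), and that the nonzero part of the affine chart is in turn a cone over $PG(r-2,q)$ with the $q-1$ possible nonzero scalings. The only minor bookkeeping is to note that using exponents $\{1,\ldots,q-1\}$ rather than $\{0,\ldots,q-2\}$ still parametrizes all of $\F_q^*$ thanks to $\gamma^{q-1}=1$; this is what justifies the indexing in the displayed formula.
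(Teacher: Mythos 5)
Your proof is correct and complete. Note that the paper itself states Lemma \ref{lem:recursive PC} without any proof, so there is no argument to compare against; your verification is exactly the one a reader would be expected to supply. The decomposition of $P_r(q)$ into the hyperplane $\{v_1=0\}$ (giving the block $\bigl(\begin{smallmatrix}0\\ H_{r-1}\end{smallmatrix}\bigr)$), the punctured affine chart $\{v_1=1,\ \mathbf{w}\neq\mathbf{0}\}$ (giving the $q-1$ scaled blocks, using that $\{\gamma^1,\ldots,\gamma^{q-1}\}=\F_q^*$ and that $P_{r-1}(q)$ is a transversal for the scalar action), and the single point $(1,\mathbf{0})$ is the right one, and your count $q\,\theta_q(r-2)+1=\theta_q(r-1)$ closes the argument. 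Your appeal to Definition \ref{def:Hamming} is also legitimate, since the paper explicitly remarks that the column order in the parity check matrix is immaterial up to code equivalence, so exhibiting the columns as the set $P_r(q)$ without repetition is indeed all that is required.
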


\subsection{The Main Theorem}
Hamming codes have been deeply studied in the theory of error correcting codes, since  they represent the only existing infinite family of perfect codes.  Indeed, in addition to Hamming codes,  the only perfect linear codes that are non-trivial, are the binary Golay code and the ternary Golay code \cite{HP,MS,TI}.

It is a well-known fact that $q$-ary Hamming codes always have minimum distance equal to 3. Moreover, a recursive formula for the weight distribution of a $q$-ary Hamming code $\HH_r(q)$ is known, and it can be found in \cite{ki07}. Anyway, from that recursion, it is not immediate how to deduce wether the number of codewords of weight $i$ in $\HH_r(q)$ is greater than zero or not. In the particular case that $r=2$ we know more about the structure of $\HH_2(q)$.

\begin{lemma}\label{lem:extRS}
For $r=2$ the Hamming code $\HH_2(q)$ is an MDS code of parameters $[q+1,q-1,3]$. In particular,
$wt(\HH_2(3))=\{0,3\}$ and $wt(\HH_2(q))=\{0,3,4,\ldots, q+1\}$ for every $q>3$.
\end{lemma}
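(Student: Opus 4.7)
The plan is to proceed in three stages: verify the MDS parameters directly from the parity-check matrix, parametrize codewords by polynomials, and then construct a codeword of each prescribed weight, isolating the exceptional case $q=3$ along the way.

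First I would compute $n = \theta_q(1) = q+1$ and $k = q-1$ directly from Definition \ref{def:Hamming}. The columns of $H_2$ are representatives of the $q+1$ points of $PG(1,q)$, so any two of them are $\F_q$-linearly independent; hence $d \geq 3$. Any three columns in $\F_q^2$ admit a one-dimensional linear dependence, and since no two of them are proportional that dependence necessarily has all three coefficients nonzero, producing a weight-$3$ codeword. Thus $d = 3 = n-k+1$, so $\HH_2(q)$ is MDS with parameters $[q+1,q-1,3]$.

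Next I would parametrize the codewords using polynomials. Taking representatives $v_\infty = (0,1)^T$ and $v_\beta = (1,\beta)^T$ for $\beta \in \F_q$, the parity-check equations become $\sum_{\beta \in \F_q} c_\beta = 0$ and $c_\infty = -\sum_{\beta \in \F_q} \beta\, c_\beta$. Using the standard power-sum identities over $\F_q$ (namely $\sum_\beta \beta^j = 0$ for $1 \leq j \leq q-2$ and $\sum_\beta \beta^{q-1} = -1$), one checks that for every $f(x) = \sum_{i=0}^{q-2} a_i x^i \in \F_q[x]$ the vector $\bigl((f(\beta))_{\beta \in \F_q},\, a_{q-2}\bigr)$ is a codeword, and a dimension count shows this parametrization is exhaustive. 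The weight of the codeword attached to $f$ is then $(q-r) + \epsilon$, where $r$ is the number of distinct roots of $f$ in $\F_q$ and $\epsilon \in \{0,1\}$ indicates whether $\deg f = q-2$.

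To realize any target weight $w \in \{3,\ldots,q+1\}$ I would force $\deg f = q-2$ and $r = q+1-w$ by choosing $f(x) = \prod_{i=1}^{r}(x-\beta_i) \cdot h(x)$, with $\beta_1,\ldots,\beta_r \in \F_q$ distinct and $h \in \F_q[x]$ of degree $q-2-r$ having no root in $\F_q$; such an $h$ is a nonzero constant when $q-2-r = 0$ and an irreducible polynomial of the right degree when $q-2-r \geq 2$. The main obstacle is the borderline case $q-2-r = 1$ (that is, $w=4$), because no linear polynomial over $\F_q$ is irreducible. For $q \geq 4$ this case forces $r = q-3 \geq 1$, and I would bypass it by boosting a multiplicity, taking $f(x) = (x-\beta_1)^2 \prod_{i=2}^{q-3}(x-\beta_i)$; this still has degree $q-2$ and exactly $q-3$ distinct roots in $\F_q$. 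When $q = 3$ the workaround is unavailable, since $w=4$ now requires $r=0$ together with $\deg f = 1$, and every linear polynomial over $\F_3$ has a root; in this case the MDS weight-distribution formula confirms $A_4 = \binom{q+1}{4}(q-1)(q-3) = 0$ at $q=3$. Combined with the easy lower bound $A_3 > 0$, this yields $wt(\HH_2(3)) = \{0,3\}$ and $wt(\HH_2(q)) = \{0,3,4,\ldots,q+1\}$ for every $q > 3$.
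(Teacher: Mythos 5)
Your proof is correct, but it takes a genuinely different and much more self-contained route than the paper. The paper's proof is three lines: it asserts the well-known parameters $[q+1,q-1,3]$, observes the code is MDS, and then delegates the entire determination of the weight set to an external result on the weights of MDS codes (\cite[Theorem 10]{EGS}), which in particular accounts for the exceptional behaviour at $(q,r)=(3,2)$. You instead verify the parameters directly from the parity-check matrix and then prove the relevant instance of that MDS weight theorem from scratch, by identifying $\HH_2(q)$ with a doubly-extended Reed--Solomon-type evaluation code: codewords are $\bigl((f(\beta))_{\beta\in\F_q},\,a_{q-2}\bigr)$ for $\deg f\le q-2$, the weight is $q-r+\epsilon$ with $r$ the number of distinct roots and $\epsilon$ the indicator of $\deg f=q-2$, and each target weight is realized by an explicit choice of $f$ (the double-root trick neatly handles the degree-one gap where no rootless polynomial exists). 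I checked the parametrization: both check equations reduce to the power-sum identities, the map is injective on a $(q-1)$-dimensional space, hence exhaustive, and the weight formula is right. What your approach buys is transparency: the exceptional case $q=3$ falls out structurally (weight $4$ would force a rootless linear polynomial), rather than from plugging $q=3$ into a cited formula; the cost is length, and your final appeal to the MDS weight-distribution formula for $A_4$ is actually redundant given that your own parametrization already rules out weight $4$ when $q=3$.
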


\begin{proof}
The parameters of $\HH_2(q)$ are well-known to be $[q+1,q-1,3]$. From there we see that this code is MDS.
The  explicit determination of the weights follows then by \cite[Theorem 10]{EGS}.
\end{proof}
%

The above result represents, except for the case $q=3$, the base step of our proof for induction. Now, we give some auxiliary tools that will enable us to prove our main results in this section.

\begin{lemma}
 Let $q>2$ be a prime power, and let $r\geq 1$  be a positive integer. Then
$$\sum_{\mathbf{v} \in \F_q^r} \mathbf{v} = {\mathbf{0}}\ {\rm and }\ \sum_{\mathbf{u}\in P_r(q)} \mathbf{u}=\mathbf{{e}_r}, $$
where $\mathbf{{e}_r}=(0,0,\ldots, 0,1)^T$.
\end{lemma}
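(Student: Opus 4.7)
The plan is to prove the two identities in order, using the first to bootstrap the second. For the first identity, I would note that the $j$-th coordinate of $\sum_{\mathbf{v}\in \F_q^r}\mathbf{v}$ equals $q^{r-1}\,\sigma$, where $\sigma:=\sum_{a\in \F_q}a$, so everything reduces to proving $\sigma=0$ in $\F_q$ whenever $q>2$. The cleanest way is the multiplicative trick: for any $b\in \F_q^*$ the map $a\mapsto ba$ permutes $\F_q$, so $b\sigma=\sigma$, i.e.\ $(b-1)\sigma=0$; since $q>2$ we may choose $b\neq 1$, and then $b-1\in \F_q^*$ forces $\sigma=0$. Each coordinate of the first sum thus vanishes.

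For the second identity I would use induction on $r$. The base case $r=1$ is immediate, since $P_1(q)=\{1\}$ and $\mathbf{e_1}=(1)$. For $r\ge 2$, split $P_r(q)$ according to the position of its leading $1$, giving the disjoint decomposition
$$ P_r(q) \;=\; \{(1,\mathbf{v}) : \mathbf{v}\in \F_q^{r-1}\} \;\sqcup\; \{(0,\mathbf{u}) : \mathbf{u}\in P_{r-1}(q)\}. $$
The first block contributes $(q^{r-1},\, \sum_{\mathbf{v}\in \F_q^{r-1}}\mathbf{v})$, which equals $\mathbf{0}\in \F_q^r$: the leading coordinate $q^{r-1}$ vanishes in $\F_q$ (as $r\ge 2$), and the tail vanishes by the first identity applied in dimension $r-1$. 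The second block contributes $(0,\, \sum_{\mathbf{u}\in P_{r-1}(q)}\mathbf{u})=(0,\mathbf{e_{r-1}})=\mathbf{e_r}$ by the inductive hypothesis. Summing the two blocks yields $\mathbf{e_r}$, as desired.

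No genuine obstacle presents itself: the only spot where $q>2$ is really needed is the vanishing of $\sigma$, and the rest amounts to careful bookkeeping between integer counts (how many vectors take a given value in a given coordinate) and arithmetic inside $\F_q$ (where positive powers of $q$ collapse to zero). As a sanity check, at $q=2$ one has $\sigma=1$ and both identities fail — for instance $P_2(\mathbb F_2)=\{(1,0),(1,1),(0,1)\}$ sums to $(0,0)\neq \mathbf{e_2}$ — confirming that the hypothesis $q>2$ is essential and that its use is correctly localized in the argument.
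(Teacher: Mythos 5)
Your proof is correct and follows essentially the same route as the paper: both reduce everything to the vanishing of $\sum_{a\in\F_q}a$ for $q>2$ and use the same decomposition of $P_r(q)$ by the position of the leading $1$, with induction on $r$ for the second identity. The only cosmetic difference is that you establish $\sum_{a\in\F_q}a=0$ via the permutation trick $a\mapsto ba$, while the paper sums the geometric series $\sum_{i=0}^{q-2}\gamma^i$ for a primitive element $\gamma$; both are standard and equally valid.
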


\begin{proof}
 For the first equality, we prove it by induction. Let $r=1$, $q\geq 3$,
and $\gamma$ be a primitive element of $\F_q$. Then
$$\sum_{v \in \F_q}v=\sum_{i=0}^{q-2}\gamma^i=\frac{\gamma^{q-1}-1}{\gamma-1},$$
since $\gamma\neq 1$. Moreover, $\gamma^{q-1}=1$ and then we get the result.

For $r>1$, we can write $\F_q^r=\bigcup_{\alpha \in \F_q}\mathcal T_{\alpha}$, where
$$ \mathcal T_\alpha:= \left\{(\alpha, \mathbf{v'}) \mid \mathbf{v'} \in \F_q^{r-1}\right\},$$
and the union is clearly disjoint. Therefore,
\begin{align*}
 \sum_{\mathbf{v} \in \F_q^r} \mathbf{v} =  \sum_{\alpha \in \F_q} \sum_{\mathbf{v'}\in \F_q^{r-1}} (\alpha, \mathbf{v'}) =  \sum_{\alpha \in \F_q} (q^{r-1}\alpha, \mathbf{0})  = \mathbf{0},
\end{align*}
where $q^{r-1}\alpha$ denotes (with a slight abuse of notation)
$$ \underbrace{\alpha + \ldots + \alpha}_{q^{r-1} \text{ times}},$$
and the second equality holds for inductive hypothesis.
\medskip

Concerning the second statement, we proceed again by induction. For $r=1$ the statement is clearly true, since $P_1(q)$ is the set with the only element ${e_1}=(1)$.

Suppose now $r>1$. Then we can write
$$P_r(q)=\left\{(0, \mathbf{u'}) \mid \mathbf{u'} \in P_{r-1}(q)\right\} \cup \left\{(1, \mathbf{v'}) \mid \mathbf{v'} \in \F_q^{r-1}\right\}. $$
Therefore,
$$ \sum_{\mathbf{u} \in P_r(q)} \mathbf{u} = \sum_{\mathbf{u'} \in P_{r-1}(q)} (0, \mathbf{u'}) + \sum_{\mathbf{v'} \in \F_q^{r-1}}(1, \mathbf{v'}) =\mathbf{{e}_r}+\mathbf{{0}}=\mathbf{{e}_r}, $$
where the second equality comes from inductive hypothesis and from the first part of this lemma.
\end{proof}

\begin{proposition}\label{prop:maxweight}
 For every $q>2$, and every integer $r\geq2$, there exists a codeword $\mathbf{c} \in \HH_r(q)$ of full weight, i.e., such that $w_H(\mathbf{c})=\theta_q(r-1)$, except for the case $(q,r)=(3,2)$.
\end{proposition}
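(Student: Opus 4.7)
The plan is to prove the proposition by induction on $r\geq 2$, using two separate base cases to accommodate the excluded pair $(q,r)=(3,2)$. The key inputs are the identities $\sum_{\mathbf{u}\in P_r(q)}\mathbf{u}=\mathbf{e_r}$ and $\sum_{\mathbf{v}\in \F_q^{s}}\mathbf{v}=\mathbf{0}$ (valid for $q>2$ and $s\geq 1$) from the lemma immediately preceding the proposition.

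For the base cases, when $q\geq 4$ and $r=2$ the code $\HH_2(q)$ is $[q+1,q-1,3]$ MDS, and Lemma \ref{lem:extRS} gives $\wt(\HH_2(q))=\{0,3,\ldots,q+1\}$, so a codeword of full weight $q+1=\theta_q(1)$ already exists. For $q=3$ this argument fails at $r=2$ (precisely the excluded pair, since $\wt(\HH_2(3))=\{0,3\}$ misses $4$), so I would shift the base case to $r=3$ and produce a full-weight codeword of $\HH_3(3)$ explicitly by exhibiting three columns of $H_3$ whose sum in $\F_3^3$ equals $-\mathbf{e_3}=(0,0,2)^T$; for instance the collinear triple $(1,1,1)^T,(1,2,2)^T,(1,0,2)^T\in P_3(3)$ works. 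Assigning the scalar $2$ to these three columns and $1$ to the remaining ten produces, by the identity $\sum_{\mathbf{u}\in P_3(3)}\mathbf{u}=\mathbf{e_3}$, a vector $\mathbf{c}$ with $\sum c(\mathbf{u})\mathbf{u}=\mathbf{e_3}+(-\mathbf{e_3})=\mathbf{0}$, all of whose entries lie in $\F_3^*$.

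For the inductive step, assume $\HH_{r-1}(q)$ admits a full-weight codeword $\mathbf{d}$ and define $c:P_r(q)\to \F_q^*$ by splitting $P_r(q)$ according to the first coordinate: $c((0,\mathbf{u}')^T)=d(\mathbf{u}')$ when $\mathbf{u}'\in P_{r-1}(q)$, and $c((1,\mathbf{v})^T)=1$ when $\mathbf{v}\in \F_q^{r-1}$. The resulting vector $\mathbf{c}$ is full-weight because $\mathbf{d}$ is and the remaining entries are $1$. To check that $\mathbf{c}$ is a codeword I would compute $\sum_{\mathbf{u}}c(\mathbf{u})\mathbf{u}$ coordinate by coordinate: the first coordinate contributes $\sum_{\mathbf{v}\in \F_q^{r-1}}1=q^{r-1}=0$ in $\F_q$, and the remaining $r-1$ coordinates contribute $\sum_{\mathbf{u}'\in P_{r-1}(q)}d(\mathbf{u}')\mathbf{u}'+\sum_{\mathbf{v}\in \F_q^{r-1}}\mathbf{v}$, which equals $\mathbf{0}+\mathbf{0}$ by the codeword condition on $\mathbf{d}$ and by the preceding lemma. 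Chaining the induction from $r=2$ for $q\geq 4$ and from $r=3$ for $q=3$ then covers all admissible pairs. The main subtle point I anticipate is the $(3,3)$ base case, which cannot be obtained from the inductive machinery (since $\HH_2(3)$ has no full-weight codeword to seed it) and so must be produced by the ad hoc collinear-triple construction above.
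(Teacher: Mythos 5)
Your proof is correct, but it takes a genuinely different route from the paper's. The paper does not induct on $r$ at all: for every $r\ge 3$ it gives a one-shot construction, exhibiting $q$ columns $\mathbf{w_0},\ldots,\mathbf{w_{q-1}}\in P_r(q)$ with $\sum_i \mathbf{w_i}+\mathbf{e_r}=\mathbf{0}$ and then weighting these columns by $1+\alpha$, the column $\mathbf{e_r}$ by $\alpha$, and all other columns by $1$; the identity $\sum_{\mathbf{u}\in P_r(q)}\mathbf{u}=\mathbf{e_r}$ collapses the check-sum to $\alpha\left(\sum_i\mathbf{w_i}+\mathbf{e_r}\right)=\mathbf{0}$, and any $\alpha\in\F_q\setminus\{0,-1\}$ (nonempty precisely because $q>2$) makes all coefficients nonzero. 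You instead induct on $r$, lifting a full-weight codeword $\mathbf{d}$ of $\HH_{r-1}(q)$ by placing $\mathbf{d}$ on the columns $(0,\mathbf{u'})$ and the constant $1$ on the $q^{r-1}$ columns $(1,\mathbf{v})$; the parity checks vanish because $q^{r-1}=0$ in $\F_q$ and $\sum_{\mathbf{v}\in\F_q^{r-1}}\mathbf{v}=\mathbf{0}$, and this step is verified correctly. Your lifting is arguably cleaner than designing the set $\mathcal{D}$ (it uses only the easier half of the preceding lemma plus the codeword property of $\mathbf{d}$), but the price is that $(q,r)=(3,3)$ cannot be seeded from $\HH_2(3)$ and must be handled by hand, whereas the paper's uniform construction covers $(3,3)$ automatically. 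Your ad hoc base case does work: $(1,1,1)^T+(1,2,2)^T+(1,0,2)^T=(0,0,2)^T=-\mathbf{e_3}$, so assigning $2$ to those three columns and $1$ to the other ten yields a full-weight codeword of $\HH_3(3)$. One minor inaccuracy: that triple is not collinear in $PG(2,3)$ (the line through the first two points contains $(1,0,0)$ and $(0,1,1)$, not $(1,0,2)$); fortunately collinearity plays no role in your argument, only the value of the sum does.
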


\begin{proof}
 For $r=2$ the statement trivially follows from Lemma \ref{lem:extRS}. Suppose now $r\geq 3$, and
 let $\gamma$ be a primitive element of $\F_q$. Consider the set $$\mathcal D=\{\mathbf{w_0}, \mathbf{w_1},\ldots, \mathbf{w_{q-1}}\},$$
where $\mathbf{w_0}=\mathbf{{e}_1}=(1,0,\ldots,0)$ and $\mathbf{w_i}=(1,\gamma^i,\gamma^i,\ldots, \gamma^i)$ for $i=1,2,\ldots, q-2$ and $\mathbf{w_{q-1}}=(1,1,\ldots,1,0)$. It is easy to see that with this choice of $\mathcal D$, we have
$$\sum_{i=0}^{q-1}\mathbf{w_i} + \mathbf{{e}_r}=0.$$
Let $\alpha \in \F_q$. We consider the sum
\begin{align*}
\sum_{\substack{\mathbf{u} \in P_r(q) \\ \mathbf{u} \notin \mathcal D\cup \{e_r\}}} \mathbf{u}+(1+\alpha)\sum_{i=0}^{q-1} \mathbf{w_i}+\alpha \mathbf{{e}_r} &=\sum_{\substack{\mathbf{u} \in P_r(q) \\\mathbf{u}\neq {e}_r}} \mathbf{u} + \alpha \sum_{i=0}^{q-1} \mathbf{w_i} +\alpha \mathbf{{e}_r}\\
&=\alpha\left(\sum_{i=0}^{q-1} \mathbf{w_i} + \mathbf{{e}_r}\right)={\mathbf{0}}.
\end{align*}
Choosing $\alpha \in \F_q\setminus \{0,-1\}$, we have a linear combination of the columns of $H_r$ that gives the zero vector, and in which every column is multiplied by a nonzero coefficient. This gives a codeword that has all nonzero entries.
\end{proof}

 Proposition \ref{prop:maxweight}  proves the existence of  a codeword of maximum weight in every Hamming code, with the exception of $\HH_2(3)$. For all the remaining weights we will use an induction argument based on the following two  results.

\begin{lemma}\label{lem:sum}
 If $\mathbf{u_i}\in \HH_{r-1}(q)$ for $i=0, 1, \ldots, q-1$, then  $$(\mathbf{u_0} \mid \mathbf{u_1} \mid \ldots \mid \mathbf{u_{q-1}}\mid \alpha ) \in \HH_r(q),$$ where
$$\alpha=-\sum_{i=1}^{q-1} \langle \mathbf{u_i}, \mathbf{e}\rangle,$$
and $\mathbf{e}$ denotes the vector with all entries equal to $1$.
\end{lemma}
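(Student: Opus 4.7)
The plan is to verify the claim directly by applying the parity check matrix $H_r$ of $\HH_r(q)$ to the candidate vector $(\mathbf{u_0}\mid \mathbf{u_1}\mid \ldots \mid \mathbf{u_{q-1}}\mid \alpha)$ and checking that the output is the zero vector. For this I would use the explicit block decomposition of $H_r$ supplied by Lemma \ref{lem:recursive PC}: the first column-block is $H_{r-1}$ topped by a row of zeros, the next $q-1$ column-blocks are scalar multiples of $H_{r-1}$ (by powers of the primitive element $\gamma$) each topped by a row of ones, and the final column is $\mathbf{e}_r$.

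The multiplication then splits naturally into two parts according to this row partition. First, reading off the top row of $H_r$, the block starting with $\mathbf{u_0}$ contributes nothing (its top is zero), the remaining blocks contribute $\sum_{i=1}^{q-1}\langle \mathbf{u_i},\mathbf{e}\rangle$, and the last column contributes $\alpha$. The total is $\sum_{i=1}^{q-1}\langle \mathbf{u_i},\mathbf{e}\rangle + \alpha$, which vanishes precisely because we chose $\alpha=-\sum_{i=1}^{q-1}\langle \mathbf{u_i},\mathbf{e}\rangle$. Second, reading off the lower $r-1$ rows of $H_r$, the contribution is $H_{r-1}\mathbf{u_0}^T + \sum_{i=1}^{q-1}\gamma^{j_i} H_{r-1}\mathbf{u_i}^T + \alpha \cdot \mathbf{0}$, which is identically zero because each $\mathbf{u_i}\in \HH_{r-1}(q)$ satisfies $H_{r-1}\mathbf{u_i}^T=\mathbf{0}$ by definition of that code.

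I do not anticipate any substantive obstacle: the argument is a one-line check once the block decomposition of $H_r$ is in hand, and the powers of $\gamma$ cause no trouble because they multiply vectors that are already in the kernel of $H_{r-1}$. The only point that requires a moment of care is matching the indexing convention: the block of $H_r$ with $\mathbf{u_0}$ is the one whose top row is zero, which is why the sum defining $\alpha$ runs only from $i=1$ to $q-1$, and the last coordinate $\alpha$ of the codeword corresponds precisely to the final column $\mathbf{e}_r$ that appears in the recursive construction of $H_r$.
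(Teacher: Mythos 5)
Your proof is correct and follows exactly the paper's approach: the paper also invokes the recursive block decomposition of $H_r$ from Lemma \ref{lem:recursive PC}, noting that rows $2,\ldots,r$ are satisfied because each $\mathbf{u_i}$ lies in the kernel of $H_{r-1}$ (the $\gamma$-scalars being irrelevant for that reason) and that the first row is satisfied by the choice of $\alpha$. You have merely written out the computation that the paper calls ``straightforward to see.''
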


\begin{proof}
According to Lemma \ref{lem:recursive PC}, it is straightforward to see that the vector $(\mathbf{u_0} \mid \mathbf{u_1} \mid \ldots \mid \mathbf{u_{q-1}}\mid \alpha )$ satisfies the check equations given by the rows $i=2,\ldots, r$ of the parity check matrix. Moreover, the condition on $\alpha$ ensures that it also satisfies the first check equation.
\end{proof}

Let $q$ be a prime power, $n$ be a positive integer. Consider the space
$$(\F_q^n)^q=\underbrace{\F_q^n \times \ldots \times \F_q^n}_{q \text{ times}}.$$
Let $S_q$ denote the symmetric group on the set $\{0,1,\ldots, q-1\}$, $id$ denote identical permutation and consider the action of the group $(\F_q^*)^q \rtimes S_q$ on  $(\F_q^n)^q$ defined as

$$\arraycolsep=2.9pt\def\arraystretch{1.7}\begin{array}{rccl}\varphi: &((\F_q^*)^q \rtimes S_q) \times (\F_q^n)^q & \longrightarrow & (\F_q^n)^q \\
&\left((\boldsymbol{\beta},\sigma), (\mathbf{u_0} \mid \ldots \mid \mathbf{u_{q-1}})\right) & \longmapsto & (\beta_0\mathbf{u_{\sigma(0)}} \mid \ldots \mid \beta_{q-1}\mathbf{u_{\sigma(q-1)}}).
\end{array}$$

It is straightforward to see that the same group acts on $(\HH_{r-1}(q))^q$. In order to simplify the notation, for
$(\boldsymbol{\beta},\sigma) \in ((\F_q^*)^q \rtimes S_q)$, $\mathbf{u} \in (\F_q^n)^q$, we will write $$(\boldsymbol{\beta},\sigma) \cdot \mathbf{u}:=\varphi((\boldsymbol{\beta},\sigma),\mathbf{u} ).$$

\begin{proposition}\label{prop:concatenation}
Let $q>2$ be a prime power and let $(\mathbf{u_0} \mid \ldots \mid \mathbf{u_{q-1}}) \in (\HH_{r-1}(q))^q$. Then there exists $(\boldsymbol{\beta}, \sigma) \in (\F_q^*)^q \rtimes S_q$ such that
$$(\beta_0\mathbf{u_{\sigma(0)}} \mid \ldots \mid \beta_{q-1}\mathbf{u_{\sigma(q-1)}} \mid 0) \in \HH_r(q).$$
\end{proposition}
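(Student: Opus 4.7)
The plan is to reduce the problem to Lemma \ref{lem:sum}. Setting $\mathbf{w_i} := \beta_i \mathbf{u_{\sigma(i)}}$, which lies in $\HH_{r-1}(q)$ by linearity of the code, that lemma gives $(\mathbf{w_0} \mid \ldots \mid \mathbf{w_{q-1}} \mid \alpha) \in \HH_r(q)$ with $\alpha = -\sum_{i=1}^{q-1} \beta_i \langle \mathbf{u_{\sigma(i)}}, \mathbf{e}\rangle$. Since the target codeword must end in $0$, it suffices to choose $\sigma \in S_q$ and $\beta_1, \ldots, \beta_{q-1} \in \F_q^*$ forcing $\alpha = 0$; the coefficient $\beta_0$ does not appear in this constraint, so it can be set arbitrarily in $\F_q^*$.

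Write $s_i := \langle \mathbf{u_i}, \mathbf{e}\rangle \in \F_q$. The reformulated task is to select an index $j_0 := \sigma(0)$ to discard, together with multipliers $\beta_i \in \F_q^*$ for $i$ in the $(q-1)$-element complement $T := \{0,\ldots,q-1\}\setminus\{j_0\}$, so that $\sum_{i \in T} \beta_i s_i = 0$. I would split into cases on $N := |\{i : s_i \ne 0\}|$. If $N \le 1$, take $j_0$ to be the unique nonzero index (or arbitrary, if $N=0$); every $s_i$ with $i\in T$ then vanishes and any $\beta_i$ works. If $N = 2$, use the hypothesis $q \ge 3$ to pick $j_0$ at a zero index, leaving exactly two nonzero $s_i$ in $T$. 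If $N \ge 3$, any $j_0$ leaves at least $N-1 \ge 2$ nonzero $s_i$ in $T$.

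In the latter two cases, fix two nonzero values $s_{i_1}, s_{i_2}$ in $T$, set $\beta_i = 1$ for the remaining $i \in T$, and let $A := \sum_{i \in T \setminus \{i_1,i_2\}} s_i$. The required identity becomes the single linear equation $\beta_{i_1} s_{i_1} + \beta_{i_2} s_{i_2} = -A$ in two unknowns over $\F_q$, whose solution set is an affine line of cardinality $q$ in $\F_q^2$. At most two of those $q$ points have a vanishing coordinate (one on each coordinate axis), so for $q \ge 3$ at least $q-2 \ge 1$ solutions lie in $(\F_q^*)^2$, providing admissible $\beta_{i_1}, \beta_{i_2}$.

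No step is deep; the proof is essentially a careful bookkeeping argument. The main conceptual observation is that Lemma \ref{lem:sum} renders $\beta_0$ irrelevant, so $\sigma$ is used purely to select which $s_i$ to drop. The hypothesis $q > 2$ is exploited in precisely two places: to guarantee that a zero $s_i$ is available to discard when $N = 2$, and to ensure the solving affine line contains a point with both coordinates in $\F_q^*$.
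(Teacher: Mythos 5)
Your proposal is correct and follows essentially the same route as the paper: reduce to Lemma \ref{lem:sum}, observe that the coefficient in position $0$ is free so the permutation only decides which $\langle \mathbf{u_i},\mathbf{e}\rangle$ is discarded, and then force the remaining sum to vanish by solving a linear equation in two unknowns from $\F_q^*$, which is possible since the solution line has $q$ points and at most two lie on a coordinate axis. The paper organizes the cases by the number of nonzero inner products among positions $1,\ldots,q-1$ and uses explicit transpositions, while you case-split on the global count $N$, but the substance is identical.
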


\begin{proof}
 Let $\mathbf{u_0}, \mathbf{u_1},\ldots, \mathbf{u_{q-1}} \in \HH_{r-1}(q)$. By Lemma \ref{lem:sum} we have that
 $(\mathbf{u_0} \mid \mathbf{u_1} \mid \ldots \mid \mathbf{u_{q-1}}\mid \alpha ) \in \HH_r(q),$
where $\alpha=-\sum_{i=1}^{q-1} \langle \mathbf{u_i}, \mathbf{e}\rangle$. If $\alpha=0$ we are done. Therefore, suppose $\alpha \neq 0$, and consider the set
$$\mathcal T:=\left\{i \in \{1,\ldots, q-1\} \mid \langle \mathbf{u_i}, \mathbf{e} \rangle\neq 0 \right\}.$$
Since $\alpha \neq 0$, then $|\mathcal T| \geq 1$. We distinguish two cases.

\underline{Case 1: $|\mathcal T| \geq 2$}. Let $r,s \in \mathcal T$ be two distinct elements. Consider a generic vector $\boldsymbol{\beta}=(\beta_0, \ldots, \beta_{q-1})$  whose entries are $\beta_i=1$ for $i \neq r,s$, and consider the elment
$(\boldsymbol{\beta}, id)\cdot (\mathbf{u_0} \mid \ldots \mid \mathbf{u_{q-1}})$. By Lemma \ref{lem:sum}, the vector
$$\left((\boldsymbol{\beta}, id)\cdot (\mathbf{u_0} \mid \ldots \mid \mathbf{u_{q-1}})\mid \tilde{\alpha}\right),$$
belongs to $\HH_r(q)$, where
\begin{align*}
\tilde{\alpha} & = -\sum_{i=1}^{q-1}\langle \beta_i \mathbf{u_i}, \mathbf{e} \rangle \\
         &=- \sum_{i=1}^{q-1} \langle \mathbf{u_i},\mathbf{e} \rangle +(1-\beta_r)\langle \mathbf{u_r},\mathbf{e}\rangle +(1-\beta_s)\langle \mathbf{u_s}, \mathbf{e}\rangle \\
   &= \alpha + (1-\beta_r)\langle \mathbf{u_r},\mathbf{e}\rangle +(1-\beta_s)\langle \mathbf{u_s}, \mathbf{e}\rangle.
\end{align*}
Imposing the condition $\tilde{\alpha}=0$ we get
\begin{equation}\label{eq:betars}
 (1-\beta_s)=- \frac{\alpha + (1-\beta_r)\langle \mathbf{u_r},\mathbf{e}\rangle}{\langle \mathbf{u_s},\mathbf{e}\rangle}.
\end{equation}
Since $q>2$, there exists $(\beta_r,\beta_s)\in \F_q^*\times \F_q^*$ satisfying Equation (\ref{eq:betars}), and we can conlcude.

\underline{Case 2: $|\mathcal T| =1$}.  Let $r$ be the only element in $\mathcal T$. We divide this case into two subcases:
\begin{itemize}
\item[(i)] If $\langle \mathbf{u_0}, \mathbf{e}\rangle=0$, then consider the transposition $\sigma=(0,r) \in \mathcal S_q$. Consider the element
$$(\mathbf{e}, \sigma)\cdot(\mathbf{u_0} \mid \ldots \mid \mathbf{u_{q-1}})=(\mathbf{v_0}\mid \ldots |\mathbf{v_{q-1}}).$$
By Lemma \ref{lem:sum}, we have
$$(\mathbf{v_0} \mid \ldots \mid \mathbf{v_{q-1}} \mid \alpha) \in \HH_r(q),$$
where
$$\alpha =-\sum_{i=1}^{q-1} \langle \mathbf{v_i}, \mathbf{e} \rangle = - \sum_{\substack{i=0 \\ i \neq r}}^{q-1} \langle \mathbf{u_i}, \mathbf{e} \rangle= -\sum_{\substack{i=0 \\ i \neq r}}^{q-1} 0= 0,$$
 and we are done.
\item[(ii)] Suppose now that $\langle \mathbf{u_0}, \mathbf{e}\rangle\neq 0$. Then there exists $s \in \{1,2,\ldots, q-1\}$ such that $s \notin \mathcal T$. We now consider the transposition  $\sigma=(0,s)$ and the corresponding vector
$$(\mathbf{e}, \sigma)\cdot (\mathbf{u_0} \mid \ldots \mid \mathbf{u_{q-1}}).$$
For this new vector we have now $|\mathcal T| =2$ and we can conclude by Case 1.
\end{itemize}
This completes the proof.\end{proof}

As a direct consequence, we immediately get the following result.
\begin{corollary}\label{cor:sumset}
 Let $q>2$ be a prime power, and let $r\geq 3$ be a positive integer. Then
$$wt(\HH_r(q))\supseteq \op_{i=0}^{q-1}wt(\HH_{r-1}(q)).$$
\end{corollary}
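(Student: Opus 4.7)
The plan is to reduce the statement directly to Proposition \ref{prop:concatenation}, using the fact that both the scalar action of $(\F_q^*)^q$ and the permutation action of $S_q$ on $(\F_q^n)^q$ preserve the multiset of block weights, and hence the total Hamming weight.

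First, I would pick an arbitrary element $w$ of $\op_{i=0}^{q-1}wt(\HH_{r-1}(q))$ and write $w=w_0+w_1+\ldots+w_{q-1}$ with $w_i\in wt(\HH_{r-1}(q))$ for each $i$. For each $i$, choose a codeword $\mathbf{u_i}\in\HH_{r-1}(q)$ with $w_H(\mathbf{u_i})=w_i$. Then $(\mathbf{u_0}\mid\mathbf{u_1}\mid\ldots\mid\mathbf{u_{q-1}})\in(\HH_{r-1}(q))^q$, and Proposition \ref{prop:concatenation} provides $(\boldsymbol{\beta},\sigma)\in(\F_q^*)^q\rtimes S_q$ such that
$$\mathbf{c}:=(\beta_0\mathbf{u_{\sigma(0)}}\mid\beta_1\mathbf{u_{\sigma(1)}}\mid\ldots\mid\beta_{q-1}\mathbf{u_{\sigma(q-1)}}\mid 0)\in\HH_r(q).$$

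Next, I would compute the weight of $\mathbf{c}$. Using the additivity of the Hamming weight under concatenation (equation \eqref{eq:sumweights}), together with the trivial observation that the appended $0$ contributes nothing and that $w_H(\beta_i\mathbf{u_{\sigma(i)}})=w_H(\mathbf{u_{\sigma(i)}})$ since $\beta_i\in\F_q^*$, we get
$$w_H(\mathbf{c})=\sum_{i=0}^{q-1}w_H(\beta_i\mathbf{u_{\sigma(i)}})=\sum_{i=0}^{q-1}w_H(\mathbf{u_{\sigma(i)}})=\sum_{i=0}^{q-1}w_i=w,$$
where the last equality holds because $\sigma$ is a permutation of $\{0,\ldots,q-1\}$ and we are simply reindexing the sum. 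Hence $w\in wt(\HH_r(q))$, which proves the inclusion.

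There is no real obstacle here: all the heavy lifting has already been carried out in Proposition \ref{prop:concatenation}. The only points to verify carefully are that the scalar $\beta_i\in\F_q^*$ genuinely preserves weight (so that we may rescale freely to kill the last coordinate) and that the permutation $\sigma$ merely reorders the summands (so that the resulting weight is still the prescribed $w$). Both are immediate.
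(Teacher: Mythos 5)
Your proof is correct and is precisely the argument the paper has in mind: the corollary is stated there as a ``direct consequence'' of Proposition \ref{prop:concatenation}, and your write-up simply makes explicit the routine verification (weight invariance under nonzero scaling and permutation of blocks, plus the zero last coordinate) that the paper leaves to the reader.
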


We are now ready to prove the main result.

\begin{theorem}\label{hamspec}
Let $q>2$ be a prime power, and let $r\geq 2 $ be an integer. Then
 $$wt(\HH_r(q))=\{0,3,\ldots,N\},$$
where $N:=\theta_q(r-1)$, except when $(q,r)=(3,2)$, in which case $wt(\HH_2(3))=\{0,3\}$.
\end{theorem}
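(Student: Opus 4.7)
The plan is to proceed by induction on $r$, with base case $r=2$ handled by Lemma \ref{lem:extRS}, and to reduce the inductive step to a purely numerical statement about sum sets of arithmetic segments.

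First I would set up the induction. For $r=2$, Lemma \ref{lem:extRS} already gives the claim, including the exceptional value at $(q,r)=(3,2)$. For the inductive step, assume $r\geq 3$ and that the statement holds for $r-1$. Let $M:=\theta_q(r-2)$ and $N:=\theta_q(r-1)$, and note the identity $qM=N-1$. I would first handle the generic case in which $(q,r-1)\neq(3,2)$, so by the inductive hypothesis $wt(\HH_{r-1}(q))=\{0,3,4,\ldots,M\}$. By Corollary \ref{cor:sumset},
$$wt(\HH_r(q))\supseteq \bigoplus_{i=0}^{q-1}\{0,3,4,\ldots,M\},$$
so the key combinatorial observation is that, as long as $M\geq 5$, this sum set equals $\{0\}\cup\{3,4,\ldots,qM\}=\{0,3,4,\ldots,N-1\}$. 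I would verify this by induction on the number $j$ of nonzero summands: one summand covers $\{3,\ldots,M\}$, two summands cover $\{6,\ldots,2M\}$, and in general $j$ summands cover $\{3j,\ldots,jM\}$; consecutive ranges overlap provided $3(j+1)\leq jM+1$, which holds for all $j\geq 1$ when $M\geq 5$. The hypothesis $M\geq 5$ is easily checked: for $r\geq 3$ and $q\geq 4$ we have $M=1+q+\ldots+q^{r-2}\geq 1+4=5$, and for $q=3$ with $r\geq 4$ we have $M\geq \theta_3(2)=13$.

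Next, to obtain the top weight $N$, I would invoke Proposition \ref{prop:maxweight}, which provides a codeword of full weight $\theta_q(r-1)=N$ (the exclusion $(q,r)=(3,2)$ there is harmless since we are in the range $r\geq 3$). Combining this with the sum-set computation gives $wt(\HH_r(q))\supseteq\{0,3,4,\ldots,N\}$. The reverse containment is immediate: the minimum distance of $\HH_r(q)$ is $3$, and no weight can exceed the length $N$.

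The one remaining obstacle is the exceptional subcase $(q,r)=(3,3)$, where the inductive hypothesis only provides $wt(\HH_2(3))=\{0,3\}$, so the sum set reduces to $\{0,3,6,9\}$ and the general argument fails. I would dispose of this case by direct computation (e.g.\ in Magma), verifying that $wt(\HH_3(3))=\{0,3,4,\ldots,13\}$. Once this single case is checked, the induction for $r\geq 4$ and $q=3$ proceeds from $r=3$ using the now-full weight set at level $r-1=3$, and $M=13\geq 5$ is satisfied, so the generic argument above applies. The main technical obstacle is thus the combinatorial lemma on sum sets together with the careful book-keeping of which base cases must be treated separately; everything else is assembled from Corollary \ref{cor:sumset}, Proposition \ref{prop:maxweight}, and Lemma \ref{lem:extRS}.
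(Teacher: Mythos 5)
Your proposal is correct and follows essentially the same route as the paper: induction on $r$ with base cases supplied by Lemma \ref{lem:extRS} and a Magma check of $(q,r)=(3,3)$, an inductive step assembled from Corollary \ref{cor:sumset} and Proposition \ref{prop:maxweight}, and a reduction to showing that the $q$-fold sumset of $\{0,3,\ldots,M\}$ equals $\{0,3,\ldots,qM\}$ when $M\geq 5$. The only (immaterial) difference is that you establish the sumset identity by overlapping the ranges $\{3j,\ldots,jM\}$ of $j$ nonzero summands, whereas the paper writes $x=an+b$ by Euclidean division and adjusts when $b\in\{1,2\}$; both arguments use exactly the hypothesis $M\geq 5$.
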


\begin{proof}
 We prove it by induction. The base step is divided in two cases\\
\underline{Case 1: \ $ (q,2)$ with $q>3$}. It follows from Lemma \ref{lem:extRS}.\\
\underline{Case 2: \ $ (q,r)=(3,3)$}. We computed it with the aid of the software Magma \cite{Ma}.

Now, suppose that the claim is true for $(q,r-1)$ and we want to prove it for $(q,r)$. By hypothesis we know that
$$wt(\HH_{r-1}(q))=\{0,3,\ldots, n\}$$
where $n=\theta_q(r-2)$, and observe that $N=qn+1$. By Corollary \ref{cor:sumset}, we know that
$$wt(\HH_r(q))\supseteq \op_{i=0}^{q-1}wt(\HH_{r-1}(q))=\op_{i=0}^{q-1} \left\{0,3,\ldots, n\right\}.$$
Moreover, since $(q,r-1) \neq (3,2)$, we have that $n\geq 5$. By Proposition \ref{prop:maxweight} we already know that $N \in wt(\HH_r(q))$. Therefore it is enough to show that
$$\op_{i=0}^{q-1} \left\{0,3,\dots, n\right\}=\left\{0,3,\dots, qn\right\}.$$
It is clear that $N-1=qn$ belongs to our set, since $n\in wt(\HH_{r-1}(q))$. Let $x \in \left\{0,3,\dots, qn-1\right\}$, we need to prove that we can write $x=x_0+x_1+\dots+x_{q-1}$ with $x_i \in \{0,3,\ldots, n\}$. By Euclidean division, we have $x=an+b$, with $0 \leq a <q$ and $0 \leq b <n$. At this point we distinguish two cases. \\
\underline{Case 1: \ $b \in \{0,3,\dots, n-1\}$}. We choose $x_0=b$, $x_1=\dots =x_a=n$ and $x_{a+1}=\dots =x_{q-1}=0$.\\
\underline{Case 2: \ $b \in\{1,2\}$}. Then, necessarily $a \geq 1$. Moreover,  we already observed that $n \geq 5$. Therefore, $n-3+b \in \{0,3,\dots, n\}$ and we choose $x_0=3$, $x_1=n-3+b$, $x_2=\dots=x_a=n$ and $x_{a+1}=\dots=x_{q-1}=0$. This concludes the proof.
\end{proof}

\end{document}